\newtheorem{theorem}{Theorem}[section]
\newtheorem{lemma}[theorem]{Lemma}
\newtheorem{proposition}[theorem]{Proposition}
\newcommand{\R}{{\mathbb R}}
\newcommand{\C}{{\mathbb C}}
\newcommand{\T}{\mathcal{T}}
\newcommand{\Tw}{\widetilde{T}}
\newcommand{\CR}{\mathcal{R}}
\newcommand{\PC}{\mathcal{P}_j[\mathbb{C}]}
\newcommand{\Hp}{\mathcal{H}}
\newcommand{\e}{\mathbf{e}}
\newcommand{\rv}{\mathbf{r}}
\newcommand{\lv}{\mathbf{l}}
\newcommand{\blambda}{\boldsymbol{\lambda}}
\newcommand{\abs}[1]{\left| #1 \right|}
\newcommand{\av}{\mathbf{a}}
\newcommand{\bv}{\mathbf{b}}
\newcommand{\norm}[1]{\left \| #1 \right \|}
\DeclareMathOperator{\diag}{diag}
\DeclareMathOperator{\re}{Re}
\DeclareMathOperator{\im}{Im}
\DeclareMathOperator{\adj}{adj}
\DeclareMathOperator{\GL}{GL}
\DeclareMathOperator{\Tr}{Tr}
\renewcommand{\arg}{\mathrm{arg}}
\renewcommand{\epsilon}{\varepsilon}
\newcommand{\be}{\begin{equation}}
\newcommand{\ee}{\end{equation}}
\numberwithin{equation}{section}
\begin{document}

\title{A matrix model of a non-Hermitian $\beta$-ensemble}
\author{Francesco Mezzadri\footnote{E-mail: f.mezzadri@bristol.ac.uk}}
\author{Henry Taylor\footnote{E-mail: ht17630@bristol.ac.uk}}
\affil{\small{\emph{School of Mathematics, University of Bristol, Fry Building, 
Woodland Road, Bristol BS8 1UG, UK}}}
\date{}
\maketitle
\begin{abstract}
    We introduce the first random matrix model of a complex $\beta$-ensemble.  
    The matrices are tridiagonal and can be thought of as the non-Hermitian
    analogue of the Hermite $\beta$-ensembles 
    discovered by Dumitriu and Edelman (\emph{J. Math. Phys.,} 
    \textbf{43} (2002), 5830). 
    The main feature of the model is that the exponent $\beta$ of the Vandermonde 
    determinant in the joint probability 
    density function (\emph{j.p.d.f.}) of the eigenvalues can take 
    any value in $\mathbb{R}_+.$ However, when $\beta=2$, the \emph{j.p.d.f.}
    does not reduce to that of the Ginibre ensemble, but it contains an extra 
    factor expressed as a multidimensional integral over the space of the eigenvectors.
\end{abstract}
\section{Introduction}
\label{introduction}
In 1965 Ginibre~\cite{Gin65} computed the the joint probability 
density function  (\emph{j.p.d.f.}) of the eigenvalues of non-Hermitian matrices whose elements are independent  
complex normal random variables, and showed that
\begin{equation}
    \label{complex_ginibre}
    P_2(\blambda) = \frac{1}{Z_2} \prod_{j=1}^n
    \exp\left(-\abs{\lambda_j}^2\right)\prod_{1\le j < k \le n}\abs{\lambda_k - \lambda_j}^\beta,
\end{equation}
where $\blambda = (\lambda_1,\dotsc,\lambda_n)\in \mathbb{C}^n$ is the spectrum, $\beta=2$, $n$ is the dimension of the matrices and $Z_2$ is a normalisation constant.
This ensemble of random matrices  is known  in the literature 
as \emph{Ginibre Unitary Ensemble,} or GinUE.  

Formula~\eqref{complex_ginibre} can be interpreted as the 
equilibrium distribution of  a two-dimensional system of unit point charges, 
which interact with logarithmic repulsion in a confining quadratic
potential at inverse temperature $\beta=2$. This is known as  
\emph{Dyson's Coulomb gas model.} 
The realization of a two-dimensional gas of charged particles in terms of eigenvalues  of matrices in the GinUE means that 
at $\beta=2$ the model is exactly integrable, and can be studied analytically in great detail.  Recent advances in the theory
of non-Hermitian random matrices have prompted a large body of research into
this statistical mechanical model.  Non-Hermitian random matrices
have found applications in quantum chromodynamics, growth problems, scattering theory,
open quantum systems and quantum chaos. (See the review articles~\cite{BF25,FS03} and references therein.) 

Given the success of Dyson's Coulomb gas model, it is natural to ask
whether it is possible to create a \emph{complex} $\beta$-\emph{matrix model,} 
analogous to the tridiagonal $\beta$-Hermite ensemble of Hermitian matrices introduced 
by Trotter~\cite{Tro84} when $\beta=1,2,4$ and extended to
$\beta \in \mathbb{R}_+$ by Dumitriu and Edelman~\cite{DE02}.
The purpose of this article is to introduce the first non-Hermitian $\beta$-ensemble. 

Consider the complex tridiagonal matrix
\begin{equation}
\label{eq:tridiagmatrix}
T = \begin{pmatrix}
a_n & 1 &  &  &  \\
b_{n-1} & a_{n-1} & 1 &  &  \\
 & \ddots & \ddots & \ddots &  \\
 & & b_2 & a_2 & 1 \\
 &  &  & b_1 & a_1 
\end{pmatrix}.
\end{equation}
Let $\mathcal{N}(\mu,\sigma)$ denotes 
the normal distribution with mean $\mu$ and standard deviation 
$\sigma$, $\mathcal{U}(a,b)$ the uniform distribution in $[a,b)$ and 
$\chi_\alpha$ the chi-distribution with parameter $\alpha \in 
\mathbb{R}_+$.  Recall that the probability density 
function of a chi-distributed random variable is 
\[
f_{\alpha}(x) =\begin{cases}
\frac{2^{1-\alpha/2}}{\Gamma\left(\frac{\alpha}{2}\right)}e^{-x^2/2}x^{\alpha-1}&\text{for}\hspace{2mm}x\geq0, \; \alpha >0, \\
0&\text{otherwise}.
\end{cases}
\]The elements of $T$ are independent random variables with distributions
\begin{subequations}
  \label{mat_el_dis} 
\begin{gather}
    \re(a_j) \sim  \mathcal{N}(0,1),  \quad \im(a_j) \sim \mathcal{N}(0,1), \quad j=1,\dotsc,n,  \\
    \abs{b_j} \sim \chi_{\beta j/2}, \quad j=1,\dotsc,n-1, \\
    \arg (b_j) \sim \mathcal{U}(0,2\pi), \quad j=1,\dotsc,n-1,
\end{gather}
\end{subequations}
where $\abs{b_j}$ and $\arg(b_j)$ are independent for each $j=1,\dotsc,n-1$. Let $\blambda = (\lambda_1,\dotsc\lambda_n)\in \C^n.$ The main result of this article is the following theorem.
\begin{theorem}
\label{main_theorem}
The j.p.d.f. of the eigenvalues of the random matrix~\eqref{eq:tridiagmatrix} is
\begin{equation}
\label{eq:EigDens1}
    P_\beta(\blambda) = \frac{1}{Z_\beta} \prod_{j=1}^n
    \exp\left(-\frac{\abs{\lambda_j}^2}{2}\right)\prod_{1\le j < k \le n}\abs{\lambda_k - 
    \lambda_j}^\beta f(\blambda),
\end{equation}
where $\beta\in \mathbb{R}_+$, $Z_\beta$ is a normalization constant,
\begin{equation}
\label{factor}
f(\blambda) =\int_{\mathbb{C}^{n-1}}
\exp\big(-g(\blambda,\mathbf{r})\big)
\prod_{j=1}^{n}|r_j|^{\frac{\beta}{2}-2}\prod_{k=1}^{n-1}d^2r_k,
\end{equation}
$\mathbf{r} =(r_1,\dotsc,r_{n}) \in \mathbb{C}^{n}$, $r_n = -r_1 - \dotsb - r_{n-1}$ and
$d^2r_j=\tfrac{i}{2}dr_j \wedge d\bar{r}_j$.  The exponent $g(\blambda,\mathbf{r})$ is 
a positive function defined by 
\begin{equation}
g(\blambda,\mathbf{r}) =  \frac{1}{2}\left(\sum_{j=1}^n \abs{a_j}^2 + \sum_{j=1}^{n-1}\abs{b_j}^2 
        + n - 1\right) - \frac{1}{2}\sum_{j =1}^n \abs{\lambda_j}^2,
\end{equation}
where the $a_j$s and $b_k$s are rational functions of $\lambda_1,\dotsc,\lambda_n$ and $r_1,\dotsc,r_{n-1}$,
and can be reconstructed interactively from the recurrence relations~\eqref{eq:sol_rec_rn}. 
(See Theorem~{\rm \ref{theorem:inversion}}.) Furthermore, we have
\begin{enumerate}
\item  $g(\blambda,\mathbf{r}) = O\left( \abs{\lambda_j}^4\right)$ as $\lambda_j \to \infty$, \quad $j=1,\dotsc,n$;
\item $g(\blambda e^{i\psi},\rv)=g(\blambda,\rv)$, \quad   
 $\,\psi \in [0,2\pi)$.
\end{enumerate}
\end{theorem}
It follows from Property 2 is that the density of 
the eigenvalues is invariant under rotations, see Fig.~\ref{fig:image2}. 
 This is also the case for the GinUE, but while in the Ginibre ensemble we can integrate 
over the eigenvectors, in this model even when $\beta=2$ such  integration is not 
straightforward. The factor $f(\blambda)$ can only be 
studied through its integral representation~\eqref{factor}. As a consequence,  the limiting local behaviour of the eigenvalues is given by the determinantal point process arising as scaling limit of the Ginibre ensemble.
In general, 
for large matrix dimension the local statistics of the eigenvalues is governed 
by 
\[ 
\abs{\Delta(\blambda)}^\beta = \prod_{1\le j < k \le n}\abs{\lambda_k - 
    \lambda_j}^\beta 
\]
and the factor $f(\blambda)$ in the \emph{j.p.d.f.} of the 
eigenvalues~\eqref{eq:EigDens1}.  The function $g(\blambda,\mathbf{r})$ is positive-definite, see~Eq.~\eqref{exponent}, and the integral on the right-hand side of Eq.~\eqref{factor} does not vanish when the $\lambda_j$s coincide pairwise; therefore, we would expect that probability density of the eigenvalue distances 
$s = \abs{\lambda_j - \lambda_k}$ should be  of $O\left(s^{\beta + 1}\right)$ for small $s.$
\begin{figure}[ht]
\begin{subfigure}{0.5\textwidth}
\includegraphics[width=1\linewidth, height=5cm]{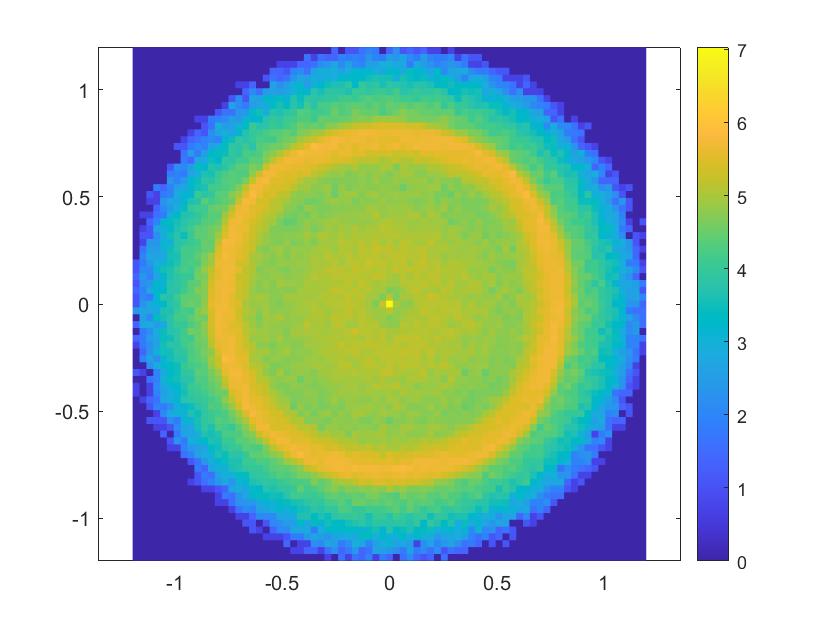} 
\caption{$\beta=2$ and $n=5000$}
\end{subfigure}
\begin{subfigure}{0.5\textwidth}
\includegraphics[width=1\linewidth, height=5cm]{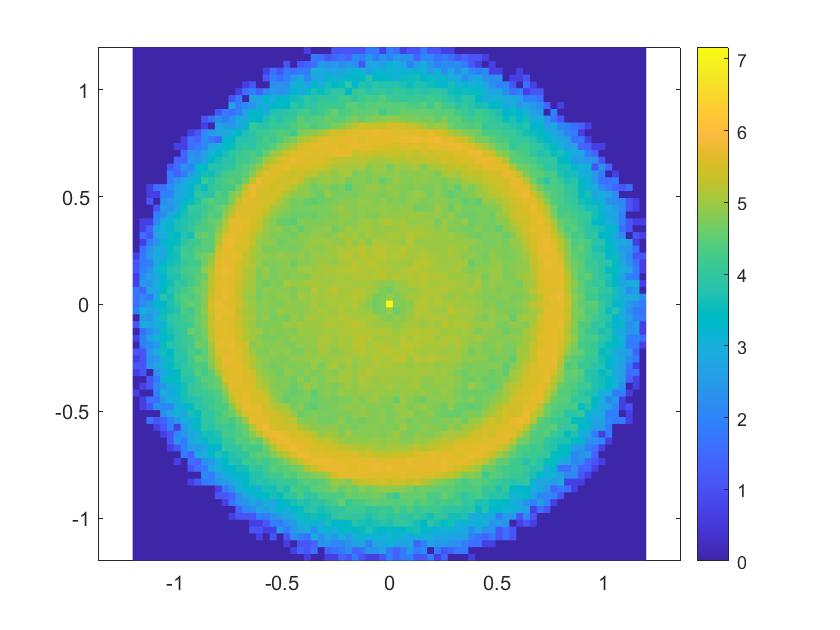}
\caption{$\beta=6$ and $n=5000$}
\end{subfigure}
\caption{Density-of-states of the complex eigenvalue spectrum of $(\ref{eq:tridiagmatrix})$ for 250 matrices of varying 
$\beta$ and dimension $5000\times 5000$, whose elements are distributed according to $(\ref{mat_el_dis})$. The density is normalized by $\sqrt{n \beta /4}$.}
\label{fig:image2}
\end{figure}

The matrices~\eqref{eq:tridiagmatrix} can be thought of as the complex analogues
of the tridiagonal model discovered by Dumitriu and Edelman~\cite{DE02}, the
$\beta$-Hermite ensemble. An important difference, however,  is that 
an arbitrary matrix in $\GL(n,\mathbb{C})$ cannot be 
reduced to the form~\eqref{eq:tridiagmatrix} by a similarity transformation, 
while a Hermitian matrix can always be mapped to a tridiagonal matrix by applying an appropriate
sequence of Householder reflections.   Indeed, Jacobi matrices are universal models of
self-adjoint operators. If the matrices~\eqref{eq:tridiagmatrix} were normal, their Schur factorization would
not contain a strictly upper-diagonal matrix, see~\eqref{schur} and~\eqref{exponent}; therefore, 
$g(\blambda ,\mathbf{r})=0$ and $f(\blambda)=1$. However,  in
general, the tridiagonal matrices~\eqref{eq:tridiagmatrix} are not 
normal and we would not expect to recover the \emph{j.p.d.f.}~\eqref{complex_ginibre} in
a model of tridiagonal complex matrices.

In a pioneering article Krishnapur \emph{et al.}~\cite{KRV16} introduced a novel
approach to study universality in random matrix theory.  Their idea is based on the
observation that after scaling, a Jacobi matrix belonging to the $\beta$-Hermite
ensemble converges as $n \to \infty$ to the stochastic Airy operator. It is not clear if similar
approach can be used for  the matrices~\eqref{eq:tridiagmatrix} to prove
spectral universality in non-Hermitian ensembles.   The reason is that, while a Jacobi matrix is uniquely associated to a set of orthogonal polynomials on the real line, there is no
relation between the matrix~\eqref{eq:tridiagmatrix} and orthogonal polynomials in the complex plane. 

The classical theory of random matrices developed in the 1950s and 1960s by
Wigner, Dyson and others~(see, \emph{e.g.}, \cite{For10,Meh04} 
for a discussion) is centred around three fundamental classes of ensembles, which correspond to the three types of  
associative division algebra with real coefficients, namely the fields of the real, complex  and 
quaternion numbers. This classification is known as   
\emph{Dyson's threefold way}~\cite{Dys62e}. 
Typical examples are the Gaussian Orthogonal Ensemble (GOE), 
Gaussian Unitary Ensemble (GUE) and the Gaussian 
Symplectic Ensemble (GSE); they are composed of real symmetric, complex Hermitian and self-dual real 
quaternion matrices, respectively, whose independent elements are normal random variables. The \emph{j.p.d.f.} 
of the eigenvalues is given by
\begin{equation}
    \label{gaussian_jpdf}
     P_\beta(\blambda) = \frac{1}{Z_\beta}\prod_{j=1}^n \exp\left(-\frac{\beta}{2}
     \lambda_j^2\right) \prod_{1\le j < k\le n}\abs{\lambda_k-\lambda_j}^\beta,
\end{equation}
where $n$ is the dimension of the matrices and $\lambda_j\in  \R$, $j=1,\dotsc,n$. The parameter $\beta$ is 
the dimension of the division algebra: $\beta=1$ for the 
GOE; $\beta=2$  for the GUE;  $\beta=4$ for the GSE. 
Dyson~\cite{Dys62a,Dys62b,Dys62c} extended this theory to unitary matrices and
introduced the Circular Unitary Ensemble (CUE),  
Circular Orthogonal Ensemble (COE) and Circular Symplectic Ensemble (CSE), 
which fall into the same classification scheme.
Zirnbauer~\cite{Zir96}, Altland and Zirnbauer~\cite{AZ97} and 
later Due\~{n}ez~\cite{Due01,Due04} generalised 
Dyson's threefold 
way to include all Cartan's symmetric spaces. 

In a pioneering article, Dumitriu  and Edelman~\cite{DE02} discovered a new class of 
random matrix models, the $\beta$-Hermite ensemble, which are made of 
tridiagonal matrices whose \emph{j.p.d.f.} of the  eigenvalues is formally the same 
as that in Eq.~\eqref{gaussian_jpdf}, but with the crucial feature that the exponent 
$\beta$ of the Vandermonde determinant can be any positive real number and does not 
identify a particular symmetric space. They also introduced the $\beta$-Laguerre  
ensembles. In 2004 Killip and Nenciu~\cite{KN04} extended the theory of Edelman and Dumitriu 
to matrix models of unitary matrices and generalised Dyson's circular ensembles 
to any $\beta\in \mathbb{R}_+$.  They also discovered the $\beta$-Jacobi
ensembles. 
Dumitriu and Forrester~\cite{DF10} gave a tridiagonal realisation 
of the antisymmetric Gaussian $\beta$-ensemble. Forrester and Rains~\cite{FR05} 
studied one-parameter interpolation of classical matrix ensembles. The impact of the
work of Dumitriu and Edelman~\cite{DE02} cannot be understated; 
it opened new areas of research in random matrix theory~\cite{AMP22,BEY12,BEY14}, the 
theory of stochastic processes~\cite{ES07,KRV16,MPS20}, Anderson 
localisation~\cite{BFS07}, integrable systems~\cite{GGGM23,GM23,Spo20,Spo21}, 
statistical mechanics~\cite{AS21,For21,NS15,SS12,SS15}.

The article is structured as follows: in Sec.~\ref{tridiag_mat} we introduce
the model and establish the map between the spectral parameters and 
the matrix~\eqref{eq:tridiagmatrix}; Sec.~\ref{sec:Jacobian} is dedicated
to the computation of the Jacobian of this map; 
in Sec.~\ref{sec:MainThm} we complete the proof of Theorem~\ref{main_theorem}.

\section{Spectral decomposition of tridiagonal matrices}
\label{tridiag_mat}
In order to prove Theorem~\ref{main_theorem}, we need to establish a few important
properties of non-Hermitian tridiagonal matrices.  Consider the matrix 
\be
\label{eq:gen_trid}
\widetilde{T}= \begin{pmatrix}
a_n & c_{n-1} &  &    \\
\widetilde{b}_{n-1} & a_{n-1} & c_{n-2}   &  \\
 & \ddots & \ddots &   \\
 &  &  \widetilde{b}_1 & a_1 
\end{pmatrix},
\ee
with $a_j,\widetilde{b}_j,c_j\in\mathbb{C}$ and $c_j \neq 0$, $j=1,\dotsc,n$. 
Then, $T = C \widetilde{T} C^{-1}$, where $b_j = c_j \widetilde{b}_j$ and
\[
C = \begin{pmatrix} 1 & &  & \\
                      & c_{n-1} & & \\
                      & & \ddots &\\
                      &  & & c_{n-1}\dotsm c_1
                      \end{pmatrix}.
\]
The matrices $T$ and $\widetilde T$ have the same spectrum, but $T$ has a smaller number of degrees of freedom, which makes the analysis of its spectrum  more amenable.   In Section~3 we will compute the Jacobian between the matrix elements of $T$ and its spectral parameters.  
It is well known that  Jacobi matrices are uniquely determined by their eigenvaules and the first row of the matrix of the eigenvectors (see, e.g., Parlett~\cite{Par98}).  The calculation of the Jacobian of $T$ is based on the observation that this property is true also for the tridiagonal matrix~\eqref{eq:tridiagmatrix}.

Denote by $\T(n)$ the set of matrices $T$ of the form~\eqref{eq:tridiagmatrix} such that 
$\det(T)\neq 0$ and whose spectrum is simple.\footnote{While the spectra of Jacobi matrices are non-degenerate, the 
matrix~\eqref{eq:tridiagmatrix} may have multiple eigenvalues. 
See Appendix~\ref{multiplicity}  for a discussion about tridiagonal complex matrices with 
degenerate spectra.}  The probability distributions~\eqref{mat_el_dis} define 
an absolutely continuous measure on the set of tridiagonal matrices~\eqref{eq:tridiagmatrix}; therefore, 
the exceptional set of matrices with degenerate spectrum forms a set of measure zero. We define
\begin{align*}
        D(n)  & = \left\{\diag(z_1,\dotsc,z_n)\colon z_j \in \C, \quad  z_j \neq 0, \quad
         j=1,\dotsc,n \right\}, \\
        \Lambda(n) & =  \left\{ \diag(\lambda_1,\dotsc,\lambda_n)\colon \lambda_j,\lambda_k \in \C, 
        \quad \lambda_j \neq \lambda_k \neq 0, \quad 1 \le j < k \le n \right\},\\
        R(n)  & = \left\{R \in \GL(n,\C) \colon T = 
        R\Lambda R^{-1},\quad T \in \T(n),\quad  \Lambda \in \Lambda(n)\right\},
\end{align*}
where $\diag\left( \cdot \right)$ indicates a diagonal matrix. Let $T \in \T(n)$.  The spectral decomposition
\begin{equation}
    \label{diag_fact}
    T = R\Lambda R^{-1}, 
\end{equation}
where $\Lambda \in \Lambda(n)$ and  $R \in R(n)$, is unique up to a permutation of the eigenvalues and 
right multiplication $R \mapsto RD$, $D\in D(n)$. Hence, the factorization~\eqref{diag_fact} defines a 
bijection
\begin{equation}
\label{sing_valued_map}
    \mathcal{F} \colon \T(n) \to  \mathcal{L}(n) \times \CR(n),
\end{equation}
where
\[
\CR(n)= R(n)/D(n),  \quad  \mathcal{L}(n) = \Lambda(n)/\mathfrak{S}_n,
\]
and $\mathfrak{S}_n$ is the symmetric group of order $n$.  
We will also study the restriction of 
$\mathcal{F}$ to $\T'(n) = \T(n) \setminus B$, where
\begin{equation}
    \label{mat_b_zero}
        B= \bigl\{ T \in \T(n) \colon b_j = 0 \:\: \text{for some} \:\: j \in \{1,\dotsc,n-1\} \bigr\}.
\end{equation}
Matrices such that one or more of the $b_j$s are zero are discussed in
Appendix~B. Write
\begin{gather*}
    R'(n)  = \left\{R \in \GL(n,\C) \colon T = 
    R\Lambda R^{-1},\quad T \in \T'(n),\quad  \Lambda \in \Lambda(n)\right\},\\
    \CR'(n)= R'(n)/D(n).\\   
\end{gather*}
\begin{lemma}
\label{parameterization}
Let $\rv = (r_1,\dots,r_n)$ and $\mathbf{v}=(v_1,\dotsc,v_n)$ be the first 
and last row of  $R\in R'(n)$, respectively. We have 
\begin{equation}
    \label{el_diff_from_zero}
        r_j \neq 0, \quad  v_j \neq 0 \quad \text{for all}  \quad j \in \left \{1,\dotsc,n\right\}.
\end{equation}
Furthermore, the subset of $R'(n)$  such that 
\begin{equation}
    \label{hyperp_1}
        r_1 + \dotsb + r_n =1
\end{equation}
spans a set of representatives in $\CR'(n)$.  
\end{lemma}
\begin{proof}
Let us take  $T\in \T'(n)$ and let
\begin{equation}
    \label{princ_subm}
        T_k = \begin{pmatrix}
                 a_n & 1 &  &  \\
                 b_{n-1} & a_{n-1} & \ddots &  \\
                 & \ddots & \ddots & 1 \\
                 &  & b_{n-k+1} & a_{n-k+1} 
              \end{pmatrix}
\end{equation}
be the principal submatrix of $T$ obtained by removing the last $n-k$ rows and columns. Note that
$T_n=T$. Let $\lambda_j^{(k)}$, $j=1,\dotsc,k$, denote the eigenvalues of $T_k$ and write
\be
\label{cp_sm1}
    \Phi_k(x) = \det\left(x I - T_{k}\right) = 
                 \prod_{j=1}^k\Bigl(x-\lambda_j^{(k)}\Bigr).
\ee
The characteristic polynomials $\Phi_k$ obey the recurrence relation 
\begin{equation}
    \label{rec_rel1}
        \Phi_{k}(x) = (x - a_{n-k +1 })\Phi_{k-1}(x) - b_{n-k +1 }\Phi_{k-2}(x), \quad k=1,\dotsc, n,
\end{equation}
with initial conditions 
\begin{equation}
    \label{in_con}   
        \Phi_{0}(x)=1,  \quad \Phi_{-1}(x)=0
\end{equation}
and $b_n$ undetermined.  Equation~\eqref{cp_sm1} implies that $\lambda_j =\lambda_j^{(n)}$ is
an eigenvalue of $T$ and that
\begin{equation}
    \label{right_eig}
        \begin{pmatrix} 
            \Phi_0\bigl(\lambda_j\bigr)\\ 
            \vdots \\
            \Phi_{n-1}\bigl( \lambda_j\bigr) 
        \end{pmatrix}
\end{equation}
is the right-eigenvector corresponding to $\lambda_j$.  Furthermore, by 
definition, the matrix 
\begin{equation*}
    R= \begin{pmatrix}
            \Phi_0\bigl(\lambda_1\bigr) & \cdots & \Phi_0\bigl(\lambda_n\bigr)\\
            \Phi_1\bigl(\lambda_1\bigr) & \cdots & \Phi_1\bigl(\lambda_n\bigr)\\
             \hdotsfor[2]{3}\\
            \Phi_{n-1}\bigl(\lambda_1\bigr)& \cdots & 
             \Phi_{n-1}\bigl(\lambda_n\bigr)
        \end{pmatrix}
\end{equation*}
belongs to $R'(n)$. If there is no degeneracy in the spectrum, the eigenvectors are unique
up to a multiplication by a constant.  Since $\Phi_0\bigl(\lambda_j\bigr)=1,$ it follows that
$r_j\neq 0$ for $j=1,\dotsc,n.$

In order to show that $v_j\neq 0$, suppose that 
\[
    \Phi_{n-1}\bigl(\lambda_j\bigr)= 0 \quad  \text{for some} \quad 
    j \in \{1,\dotsc,n\}.
\]
Since $b_{j}\neq 0$, $j=1,\dotsc,n-1$, it  follows by induction that all the entries 
of~\eqref{right_eig} are zero and that $\det R = 0$, which contradicts the assumption that the 
spectrum is simple. 

We now want to show that
\begin{equation}
    \label{setrep}
        L = \left\{ R \in R'(n) \colon  r_1 + \dotsb + r_n =1\right\}
\end{equation}
is a set of representatives of $R'(n)$. Suppose that the elements of the first column of $R^{-1}$ are all 
different from zero. Then,  Eq.~\eqref{hyperp_1} follows from the identity $RR^{-1}=I$ and from the fact that 
matrices in the same equivalent class of $\CR'(n)$ differ by the right multiplication  $R \mapsto RD$, where
$D\in D(n).$ The proof that the elements of the first column of $R^{-1}$ are all different from zero 
follows similar lines to those that we have used to prove 
Eq.~\eqref{el_diff_from_zero}.

Take the principal submatrix of $T$ obtained by removing the first $n-k$ rows and columns,
\begin{equation}
    \label{princ_subm_2}
        U_k = \begin{pmatrix}
                a_k & 1 &  &  \\
                b_{k-1} & a_{k-1} & \ddots &  \\
                 & \ddots & \ddots & 1 \\
                 &  & b_1 & a_1 
              \end{pmatrix}.
\end{equation}
Let $\mu_j^{(k)}$, $j=1,\dotsc,k$, denote the eigenvalues of $U_k$ and write
\be
    \label{cp_sm2}
        \chi_k(x) = \det\left(x I - U_{k}\right) = 
        \prod_{j=1}^k\Bigl(x-\mu_j^{(k)}\Bigr).
\ee
We have that $\lambda_{j}= \mu^{(n)}_{j}.$ The polynomials $\chi_k$ obey the recurrence relation 
\begin{equation}
\label{rec_rel2}
\chi_k(x) = (x - a_k)\chi_{k-1}(x) -b_{k-1}\chi_{k-2}(x), \quad k=1,\dotsc,n,
\end{equation}
with initial conditions 
\[
\chi_{0}(x)=1, \quad  \chi_{-1}(x)=0
\]
and $b_0$ undetermined.  Equation~\eqref{rec_rel2} and
\[
    \chi_n\bigl(\lambda_j\bigr)=0, \quad j=1,\dotsc,n,
\]
imply that 
\begin{equation}
    \label{left_eig}
        \left(\chi_{n-1}\bigl(\lambda_j\bigr),\dotsc, \chi_0\bigl( \lambda_j\bigr) \right)
\end{equation}
is the left eigenvector corresponding to $\lambda_j$ and is proportional to the $j$-th row of $R^{-1}.$ 
The same arguments that led to the conclusion that $\Phi_{n-1}\bigl(\lambda_j\bigr)\neq 0$ give
$\chi_{n-1}\bigl(\lambda_j\bigr)\neq 0.$
\end{proof}

Define the set
\begin{equation}
    \label{hyperplane}
        \Hp_n = \left\{ \rv \in \C^n \colon r_1 + \dotsb + r_n =1, \quad r_j \neq 0, 
        \quad j=1,\dotsc,n \right\}.
\end{equation}
We want to show that~\eqref{sing_valued_map} induces an injective map 
\begin{equation}
    \label{induced_map}
        \mathcal{G} \colon \T'(n) \to \mathcal{L}(n) \times \Hp_n.
\end{equation}
Lemma~\ref{parameterization} and Eq.~\eqref{sing_valued_map} imply that $\mathcal{G}$ is single-valued. 
In order to prove that it is injective, we need some more work. In Sec.~\ref{sec:Jacobian} we will compute 
the Jacobian of $\mathcal{G}.$ Let us denote 
\begin{equation}
    \label{eq:newnot}
        R =\begin{pmatrix}\rv_1^t\\ \vdots \\ \rv_n^t\end{pmatrix}, \quad 
            R^{-1}= \left(\lv_1,\dotsc,\lv_n\right),
\end{equation}
where 
\[
\rv_j^t = \left( r_{j1},\dotsc, r_{jn}\right), \quad
 \lv_j = \begin{pmatrix} l_{1j} \\ \vdots \\ l_{nj} \end{pmatrix}, \quad j=1,\dotsc,n.
\]
In addition we will write
\begin{equation*}
\rv^t_j \lv_{k} = r_{j1}l_{1k}+ \dotsb + r_{jn}l_{nk} = \delta_{jk}, \quad 
j,k=1,\dotsc,n,
\end{equation*}
where the notation on the left-hand side is not the usual scalar product in $\C^n$, but
stands for matrix multiplication between
the row vector $\rv_j^t$ and the column vector $\lv_k$.

\begin{theorem}
\label{theorem:inversion}
Let $\Lambda \in \Lambda(n)$ and $\rv = (r_1,\dotsc,r_n) \in \Hp_n$. There exist a unique 
$R\in R'(n)$ whose first row is $\rv$ and a unique $T\in \T'(n)$  
such that $T = R\Lambda R^{-1}$, except for $\rv$ belonging to an exceptional set $M\subset \Hp_n$
of measure zero.
\end{theorem}
\begin{proof}
Take $R$ from the set of representative of $\CR'(n)$ defined in Lemma~\ref{parameterization}. Let us 
write the matrix equations
\[
TR = R\Lambda \quad \text{and} \quad  R^{-1}T = \Lambda R^{-1}
\]
as a set of vector equations:
\begin{subequations}
\label{eq:rec_r}
\begin{align}
\label{eq:rec_ra}
\rv_j^t\Lambda  & = b_{n-j+1}\rv_{j-1}^t + a_{n-j+1}\rv_j^t + \rv_{j+ 1}^t, \\
\label{eq:rec_rb}
\Lambda\lv_j & = \lv_{j-1} + a_{n-j+1}\lv_j +b_{n-j}\lv_{j+1},
\end{align}
\end{subequations}
for $j=1,\dotsc,n$, where we have adopted the notation~\eqref{eq:newnot}.  Let
$\rv_1=\rv$ and

We want to show that given $\Lambda$ and $\rv_1^t$ with $\rv^t_1\lv_1=1$, where 
\[
    \lv_1 = \begin{pmatrix}
        1 \\ \vdots \\ 1
    \end{pmatrix},
\]
we can reconstruct $T$, $R$ and $R^{-1}$ uniquely from the recurrence relations~\eqref{eq:rec_r} with boundary conditions $b_n=0$ and $\lv_0=\mathbf{0}$,
$\rv_{n+1}^t=\mathbf{0}$. The vectors $\rv_0^t$ and $\lv_{n+1}$ are undetermined.

Write
\begin{subequations}
\label{eq:sol_rec_r1}
\begin{align}
\label{eq:an}
a_n & = \rv_1^t \Lambda \lv_1,\\
\label{eq:r2}
\rv_2^t & = \rv_1^t \Lambda - a_n \rv_1^t,\\
\label{eq:bn1}
   b_{n-1} & = \rv_2^t \Lambda   \lv_1,\\
   \label{eq:l2}
        \lv_2 &= \frac{1}{b_{n-1}}\left(\Lambda  \lv_1 - a_n \lv_1\right).
\end{align}
\end{subequations}
For any fixed $\Lambda,$ the set exceptional points $\rv_1$ such that $b_{n-1}=0$ has measure zero 
in $\Hp_n$ and can be neglected. The quantities $a_n$, $b_{n-1}$, $\rv_1^t$ and $\lv_1$ solve~\eqref{eq:rec_r}
for $j=1$. We need to check that they are consistent with the 
condition $RR^{-1} = I$. Eqs.\eqref{eq:an} and~\eqref{eq:r2} give $\rv_2^t  \lv_1=0$. Similarly, by 
Eqs.~\eqref{eq:an} and~\eqref{eq:l2} we have $\rv_1^t  \lv_2 =0$.  Finally, we have
\begin{equation*}
    \rv_2^t  \lv_2 =  \frac{1}{b_{n-1}} \left(\rv_2^t \Lambda 
    \lv_1 - a_n\rv_2^t  \lv_1\right) = 1.
\end{equation*}
Note that combining the orthogonality relations $\rv_1^t \lv_2 = 0$ and $\rv_2^t \lv_2=1$ with Eq.~\eqref{eq:r2} gives
\[
\rv_1^t  \Lambda  \lv_2 = 1,
\]
as it should be. 

When $1<j<n$ take 
\[
\rv_{j-1}^t,\; \rv_j^t,\; \lv_{j- 1},\; \lv_j
\]
subject to the conditions
\begin{subequations}
\label{eq:int_ic}
\begin{gather}
\rv^t_j \lv_j= \rv^t_{j-1}\lv_{j-1}=1, \quad \rv^t_j \lv_{j-1}= \rv^t_{j-1} \lv_{j}=0,\\
b_{n - j +1} = \rv_{j}^t  \Lambda \lv_{j-1} \neq 0, \quad \rv_{j-1}^t \Lambda \lv_{j}=1.
\end{gather}
\end{subequations}
Define
\begin{subequations}
\label{eq:sol_rec_rn}
\begin{align}
\label{eq:anj}
a_{n-j+1} & = \rv_j^t \Lambda  \lv_j,\\
\label{eq:rj}
\rv_{j+1}^t & =  \rv_j^t \Lambda - a_{n - j + 1}\rv_j^t - b_{n-j+1}\rv_{j-1}^t\\
\label{eq:bnj}
   b_{n-j} & = \rv_{j+1}^t  \Lambda   \lv_j,\\
   \label{eq:lj}
        \lv_{j+1} &= \frac{1}{b_{n-j}}\left(\Lambda  \lv_j- a_{n - j +1}\lv_j-\lv_{j-1}\right).
\end{align}
\end{subequations}
For any fixed $\Lambda$, we have $b_{n-j}=0$ for $\rv_{j+1}$ and $\lv_j$ belonging to a set of measure zero in $\C^n.$ 
In turn, $\rv_{j+1}$ and $\lv_j$ are rational functions of the elements of $\rv_1$; therefore,  $b_{n-j}=0$ only for 
$\rv_1$ belonging to a set of  measure zero in $\Hp_n.$ Combining~\eqref{eq:int_ic} with~\eqref{eq:sol_rec_rn} and
proceeding in a similar way as when $j=1$, we can show that
\begin{align*}
\rv_{j+1}^t \lv_{j} &= 0, & \rv_{j+1}^t  \lv_{j-1}& =0,\\
\rv_{j}^t \lv_{j+1} &= 0, & \rv_{j-1}^t  \lv_{j+1}&=0.
\end{align*}
Furthermore, we have
\begin{align*}
\rv_{j+1}^t \lv_{j+1}& = \frac{1}{b_{n-j}}\left(\rv_{j+1}^t 
\Lambda  \lv_j- a_{n - j +1}
\rv_{j+1}^t \lv_j-\rv_{j+1}^t \lv_{j-1}\right)=1\\
\rv_{j}^t   \Lambda   \lv_{j+1} & = b_{n-j+1}\rv_{j-1}^t  \lv_{j+1}
+ a_{n-j+1}\rv_j^t  \lv_{j+1}+ \rv^t_{j+ 1} \lv_{j+1} =1.
\end{align*}
The orthogonality relations
\[
\rv_{j+1}^t  \lv_{k}=0, \quad \rv_k^t  \lv_{j+1}=0, \quad k=1, \dotsc, j-2
\]
follow by induction and from the fact that by construction $\rv^t_j \Lambda \lv_k = 0$ if $\abs{j-k}>1$.

The boundary condition $\rv^t_{n+1}=\mathbf{0}$ terminates the recurrence relations.
\end{proof}

The proof of Theorem~\ref{theorem:inversion} breaks down when $T\in B,$ where the set $B$ was
introduced in~\eqref{mat_b_zero}.  Even if $B$ has measure zero and can be neglected, it is 
interesting to understand what happens when some of the matrix elements $b_j$ become zero.  The
adaptation of the proof  to include $T \in B$ is detailed in Appendix~\ref{zero_off_d}.

\section{The Jacobian}
\label{sec:Jacobian}
 
Let $\mathbf{z} =(z_1,\dotsc,z_n)\in \C^n$ and write $dz_j = d\re(z_j) + id\im(z_j)$.
We use the notation $d\mathbf{z} = \wedge_j \,dz_j,$ where the wedge product is taken over all
the components of $z_j,$ and $d^2 z_j = \tfrac{i}{2}dz_j \wedge d\bar{z}    _j.$

The following theorem generalises well known results for Jacobi 
matrices~\cite{DE02,FR06} to the  model~\eqref{eq:tridiagmatrix}. 
\begin{theorem}
\label{thm:jac}
We have
\begin{equation*}
\prod_{k=1}^{n-1}d^2b_k\prod_{j=1}^n d^2a_j=\frac{\prod_{j=1}^{n-1}\abs{b_j}^2}{\prod_{k=1}^n 
\abs{r_k}^2}\prod_{k=1}^{n-1}d^2r_k\prod_{j=1}^n d^2\lambda_j .
\end{equation*}
\end{theorem}
In order to prove this theorem, we need few preliminary results.  Denote by $\widetilde{T}_j$ the principal submatrix of~\eqref{eq:gen_trid} obtained by keeping the first $j$ 
rows and columns. We have $\widetilde T_n = \widetilde T$. Introduce the characteristic polynomial
\begin{equation}
       \label{cp_sm}
       \chi_j(z) =\det\left(zI - \widetilde{T}_j\right), \quad j=1,\dotsc,n,
\end{equation}
and set $\chi_0(z)=1$. Let  $\mathcal{P}_j[\mathbb{C}]$ be the set of monic
polynomials of degree $j$ over $\C$ and $\e_j$ be a canonical basis vector in $\C^n$, 
\emph{i.e.} $\e_j^t
=\overset{j}{(0,\dotsc,1,\dotsc,0)}$.
\begin{proposition}
\label{prop}
Let $j=1,\dotsc,n-1.$ The following statements hold:
\begin{enumerate}
    \item $\bigl|\widetilde{b}_{n-1}\widetilde{b}_{n-2}\dotsm \widetilde{b}_{n-j}\bigr|=
    \bigl \|\chi_j(\widetilde{T})\e_1\bigr \|=\underset{\psi \in \PC}
    {\min}\bigl \|\psi(\widetilde{T})\e_1\bigr \|$;
    \item $\chi_j(\widetilde{T})\e_1=\e_{j+1}\widetilde{b}_{n-1}\widetilde{b}_{n-2}\dotsm \widetilde{b}_{n-j}$;
    \item $|c_{n-1}c_{n-2}\dotsm c_{n -j}|=
    \bigl \|\e_1^t\chi_j(\widetilde{T})\bigr \|=\underset{\psi \in \PC}
    {\min}\bigl \|\e_1^t\psi(\widetilde{T})\bigr\|$;
    \item $\e_1^t\chi_j(\widetilde{T})= \e_{j+1}^tc_{n-1}c_{n-2}\dotsm c_{n-j}$.
\end{enumerate}
\end{proposition}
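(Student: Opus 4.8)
The plan is to exploit the bidiagonal-like action of $\widetilde T$ on the canonical basis vectors. The key observation is that $\widetilde T$ applied to $\e_k$ mixes only $\e_{k-1}$, $\e_k$ and $\e_{k+1}$, with the coefficient of $\e_{k+1}$ being $\widetilde b_{n-k}$ (reading off the subdiagonal entries of $\widetilde T$ as written in~\eqref{eq:gen_trid}) and the coefficient of $\e_{k-1}$ being $c_{n-k+1}$. More precisely, I would first prove part 2 by induction on $j$: the base case $j=0$ is $\chi_0(\widetilde T)\e_1 = \e_1$, and for the inductive step I would use the three-term recurrence for the characteristic polynomials of the principal submatrices, namely $\chi_{j+1}(z) = (z - a_{?})\chi_j(z) - \widetilde b_{?}\, c_{?}\,\chi_{j-1}(z)$ for the appropriate indices, together with the induction hypotheses $\chi_j(\widetilde T)\e_1 = (\widetilde b_{n-1}\dotsm \widetilde b_{n-j})\e_{j+1}$ and $\chi_{j-1}(\widetilde T)\e_1 = (\widetilde b_{n-1}\dotsm \widetilde b_{n-j+1})\e_j$. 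Applying $\widetilde T$ to $\e_{j+1}$ produces a component along $\e_{j+2}$ with coefficient $\widetilde b_{n-j-1}$, a component along $\e_{j+1}$, and a component along $\e_j$; the recurrence is designed precisely so that the $\e_{j+1}$ and $\e_j$ contributions cancel and one is left with $(\widetilde b_{n-1}\dotsm \widetilde b_{n-j-1})\e_{j+2}$. One must be slightly careful that the $\chi_j$ are characteristic polynomials of the \emph{top-left} submatrices while the recurrence~\eqref{eq:rec_r} in the preceding section is organized from the bottom-right corner; I would fix an index convention at the outset and verify the three-term recurrence for $\chi_j$ directly by cofactor expansion of $\det(zI - \widetilde T_{j+1})$ along its last row.

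Once part 2 is established, part 1 follows almost immediately: taking norms in part 2 gives $\|\chi_j(\widetilde T)\e_1\| = |\widetilde b_{n-1}\dotsm \widetilde b_{n-j}|\,\|\e_{j+1}\| = |\widetilde b_{n-1}\dotsm \widetilde b_{n-j}|$ since $\e_{j+1}$ is a unit vector. For the minimization statement, I would argue that any monic $\psi \in \PC$ can be written $\psi = \chi_j + q$ with $\deg q \le j-1$; then $\psi(\widetilde T)\e_1 = \chi_j(\widetilde T)\e_1 + q(\widetilde T)\e_1$, and the crucial point is that $q(\widetilde T)\e_1$ lies in $\operatorname{span}\{\e_1,\dotsc,\e_j\}$ while $\chi_j(\widetilde T)\e_1$ is a multiple of $\e_{j+1}$, hence orthogonal. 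By the Pythagorean identity $\|\psi(\widetilde T)\e_1\|^2 = \|\chi_j(\widetilde T)\e_1\|^2 + \|q(\widetilde T)\e_1\|^2 \ge \|\chi_j(\widetilde T)\e_1\|^2$, with equality iff $q(\widetilde T)\e_1 = 0$. That $q(\widetilde T)\e_1 \in \operatorname{span}\{\e_1,\dotsc,\e_j\}$ for $\deg q \le j-1$ is itself a short induction using the tridiagonal action, establishing that $\widetilde T^k \e_1 \in \operatorname{span}\{\e_1,\dotsc,\e_{k+1}\}$.

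Parts 3 and 4 are the transpose-mirror statements: $\e_1^t \widetilde T$ mixes $\e_1^t$, $\e_2^t$ with the superdiagonal entry $c_{n-1}$ playing the role that $\widetilde b_{n-1}$ played on the other side. Rather than repeating the argument, I would observe that $\e_1^t \psi(\widetilde T) = \bigl(\psi(\widetilde T^t)\e_1\bigr)^t$ and that $\widetilde T^t$ is again a tridiagonal matrix of the same shape but with the roles of the $\widetilde b_j$ and $c_j$ interchanged; moreover $\widetilde T$ and $\widetilde T^t$ share the same principal submatrices' characteristic polynomials $\chi_j$ (a matrix and its transpose have the same characteristic polynomial, and transposition commutes with taking top-left submatrices). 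Applying parts 1 and 2 to $\widetilde T^t$ then yields parts 3 and 4 verbatim with $\widetilde b_{n-k} \leftrightarrow c_{n-k}$.

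I expect the main obstacle to be purely bookkeeping: getting the index shifts in the three-term recurrence for $\chi_j$ exactly right — in particular matching ``$a_{n-j}$ vs.\ $a_{j+1}$'' and ``$\widetilde b \cdot c$ at which index'' — against the bottom-right-anchored conventions used elsewhere in the paper, and confirming that $\chi_j(\widetilde T)\e_1$ (the full $n\times n$ matrix polynomial applied to a vector) rather than $\chi_j(\widetilde T_j)$ is what appears. There is no deep difficulty; the substance is the orthogonality of $\e_{j+1}$ to $\operatorname{span}\{\e_1,\dotsc,\e_j\}$, which makes the Krylov/minimal-polynomial structure transparent and gives the minimization for free.
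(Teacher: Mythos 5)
Your proof is correct, but it reaches the key identity (statement 2) by a different mechanism than the paper. The paper first shows by induction that $\e_k^t\Tw^j\e_1$ equals $\widetilde{b}_{n-1}\dotsm\widetilde{b}_{n-j}$ for $k=j+1$ and vanishes for $k>j+1$; the lower bound $\bigl\|\psi(\Tw)\e_1\bigr\|\ge\bigl|\widetilde{b}_{n-1}\dotsm\widetilde{b}_{n-j}\bigr|$ for monic $\psi\in\PC$ then follows because the $(j+1)$-st component of $\psi(\Tw)\e_1$ is pinned to that product, and sharpness is obtained from the projection identity $E_j^t\psi(\Tw)\e_1=\psi(\Tw_j)\e_1$ together with Cayley--Hamilton, $\chi_j(\Tw_j)=0$, which kills the first $j$ components of $\chi_j(\Tw)\e_1$. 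You instead prove statement 2 directly by induction on $j$ via the tridiagonal three-term recurrence $\chi_{j+1}(z)=(z-a_{n-j})\chi_j(z)-\widetilde{b}_{n-j}c_{n-j}\chi_{j-1}(z)$; your index bookkeeping does resolve correctly, since $\Tw\e_{j+1}=c_{n-j}\e_j+a_{n-j}\e_{j+1}+\widetilde{b}_{n-j-1}\e_{j+2}$, so the $\e_j$ terms cancel exactly as you predict (you just need both $j=0$ and $j=1$ as base cases, or the convention $\chi_{-1}=0$). Your Pythagorean decomposition $\psi=\chi_j+q$ with $q(\Tw)\e_1\in\mathrm{span}\{\e_1,\dotsc,\e_j\}$ is essentially a repackaging of the paper's component bound, and your reduction of statements 3 and 4 to statements 1 and 2 applied to $\Tw^t$ (same $\chi_j$, roles of $\widetilde{b}_k$ and $c_k$ exchanged) is a cleaner alternative to the paper's ``the proof is analogous.'' What the paper's route buys is that Cayley--Hamilton gives $\chi_j(\Tw)\e_1\propto\e_{j+1}$ without any explicit recurrence computation; what your route buys is an entirely self-contained, computational verification (the recurrence you invoke is in fact the one the paper itself uses in Appendix B, anchored at the opposite corner) and a transparent explanation, via orthogonality, of why $\chi_j$ is the minimizer.
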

\begin{proof}
One can prove by induction that
\[
\e_k^t\widetilde{T}^j\mathbf{e}_1= \begin{cases} \widetilde{b}_{n-1}\widetilde{b}_{n-2}\dotsm \widetilde{b}_{n-j}, & \text{if $k=j+1$,}\\
0, & \text{if $k>j+1$.} \end{cases}  
\]
It follows that for any $\psi \in \PC$ we have
\[
\bigl \|\psi(\widetilde{T})\e_1\bigr \| \ge \bigl |\widetilde{b}_{n-1}\widetilde{b}_{n-2}\dotsm \widetilde{b}_{n-j}\bigr |. 
\]
This lower bound is sharp if there exist a $\psi\in \PC$ such that
\[
\psi(\widetilde{T})\e_1 = \e_{j+1}\widetilde{b}_{n-1}\widetilde{b}_{n-2}\dotsm \widetilde{b}_{n-j}.
\]

Note that the first $j$ components of $\widetilde{T}^j\e_1$ contain  only elements 
from $\widetilde{T}_j$, as do the vectors $\widetilde{T}^k\mathbf{e}_1$ for $1\le k<j$. 
Let 
\[
E_j=(\mathbf{e}_1,\dotsc ,\mathbf{e}_j).
\]
Then, we have
\[
E_j^t\psi(\widetilde{T})\e_1 = \psi(\widetilde{T}_j)\e_1,
\]
where, with an abuse of notation, $\e_1$ on the left-hand
side belongs to $\C^n$, while on the right-hand side $\e_1\in \C^j$.  
The Cayley-Hamilton theorem states that $\chi_j(\widetilde{T}_j)=0$. 
It follows that
\[
\chi_j(\widetilde{T})\e_1 = \e_{j+1}\widetilde{b}_{n-1}\widetilde{b}_{n-2}\dotsm \widetilde{b}_{n-j}.
\]
This proves statements 1. and 2.  The proof of 3. and 4. is analogous.
\end{proof}

Let 
\[
\Delta(\boldsymbol{\lambda})=\prod_{1\le j<k\le n}\left(\lambda_k-\lambda_j\right)
\]
denote the Vandermonde determinant.  
\begin{lemma}
\label{lem:jac1}
Let $T=R\Lambda R^{-1}$ be the spectral decomposition of $T\in \T(n)$, where $R$ is defined in~\eqref{eq:newnot} and
$\Lambda = \diag(\lambda_1, \dotsc,\lambda_n)$. We have
\[
\Delta(\boldsymbol{\lambda})^2=\frac{\prod_{j=1}^{n-1}b_j^j}{\prod_{j=1}^{n}r_{j}},
\]
where $\rv = (r_1,\dotsc,r_n)$ is the first row of $R$ and
\[
r_1 + \dotsb + r_n =1.
\]
\end{lemma}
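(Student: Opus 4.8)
The plan is to express the discriminant $\Delta(\blambda)^2 = \prod_{j<k}(\lambda_k-\lambda_j)^2$ via the resultant/characteristic-polynomial machinery and then evaluate it using the spectral decomposition together with Proposition~\ref{prop}. Recall that $\Delta(\blambda)^2$ equals (up to the standard sign, which is $+1$ when written as a product of squares) the resultant $\mathrm{Res}(\chi_n,\chi_n')$, and that for a matrix with simple spectrum one has the representation in terms of left/right eigenvectors $\ell_j, r_j$: the quantity $\prod_{j}(\rv_1^t\lv_j)(\rv_j^t\lv_1)$ — i.e. the product over eigenvalues of the $(1,1)$-entry of the spectral projector — relates the discriminant to the leading coefficients of the subdiagonal products. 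Concretely, I would use the classical fact that for the companion-type data attached to $T$, the product $b_1 b_2^2 \cdots b_{n-1}^{n-1}$ (more precisely $\prod_{j=1}^{n-1} b_j^j$) equals $\Delta(\blambda)^2 \prod_{k} r_k$, which is exactly the claimed identity; the job is to prove it cleanly from the tools already set up.

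First I would invoke Proposition~\ref{prop}, parts 1 and 2, applied to $T$ itself (the case $c_j=1$, $\widetilde b_j = b_j$), to get $\chi_j(T)\e_1 = \e_{j+1}\, b_{n-1}b_{n-2}\cdots b_{n-j}$ for $j=1,\dots,n-1$. Next I would diagonalise: $\chi_j(T) = R\,\chi_j(\Lambda)\,R^{-1}$, so that $\chi_j(T)\e_1 = \sum_{m} \chi_j(\lambda_m)\, (R\e_m)(\e_m^t R^{-1}\e_1)$. Since the first row of $R$ is $\rv = (r_1,\dots,r_n)$ and the first column of $R^{-1}$ is $\lv_1 = (1,\dots,1)^t$ by the normalisation in Lemma~\ref{parametrazion} and Eq.~\eqref{one_vec}, the scalar $\e_m^t R^{-1}\e_1 = 1$ for every $m$, and $\e_1^t R\e_m = r_m$. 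Taking the first component of the eigenvector identity gives $\e_1^t\chi_j(T)\e_1 = \sum_m \chi_j(\lambda_m)\, r_m$; but by Proposition~\ref{prop} part 2 the left side is $0$ for $1\le j\le n-1$ (since $\e_{j+1}$ has vanishing first component). Thus $\sum_m r_m\, \chi_j(\lambda_m) = 0$ for $j=0,1,\dots,n-1$ — wait, for $j=0$ this reads $\sum_m r_m = 1$, which is exactly~\eqref{setrep}; for $1\le j\le n-1$ it is a genuine orthogonality. These $n$ linear relations, written against the monic basis $\{\chi_j\}_{j=0}^{n-1}$ of polynomials of degree $<n$, are equivalent to the statement that $\sum_m r_m\, p(\lambda_m) = p$-independent-of-$\lambda$ data; in particular, applying them to the Lagrange interpolation basis shows $r_m = \prod_{k\ne m}(\lambda_m - \lambda_k)^{-1} \cdot (\text{something})$. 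The cleanest route: the relations $\sum_m r_m \lambda_m^{\,j} = \delta_{j,?}$ force $r_m = 1/\chi_n'(\lambda_m) = \prod_{k\neq m}(\lambda_m-\lambda_k)^{-1}$, because that is precisely the partial-fraction expansion of $1/\chi_n(z)$ evaluated via $\sum_m r_m/(z-\lambda_m) = 1/\chi_n(z)$, whose Taylor coefficients at infinity are $\sum_m r_m \lambda_m^j$, matching $0$ for $j<n-1$ and $1$ for $j=n-1$.

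Granting $r_m = \prod_{k\neq m}(\lambda_m - \lambda_k)^{-1}$, the product $\prod_{m=1}^n r_m = \prod_{m}\prod_{k\neq m}(\lambda_m-\lambda_k)^{-1}$; each unordered pair $\{j,k\}$ contributes $(\lambda_j-\lambda_k)(\lambda_k-\lambda_j) = -(\lambda_k-\lambda_j)^2$, so $\prod_m r_m = (-1)^{\binom n2}\Delta(\blambda)^{-2}$. Separately, I need $\prod_{j=1}^{n-1} b_j^{\,j} = (-1)^{\binom n2}$ times nothing — i.e. I must track the sign and confirm $\prod_{j=1}^{n-1}b_j^j = \Delta(\blambda)^2 \prod_m r_m$ reduces to $(-1)^{\binom n2}\prod_{j} b_j^j \Delta^{-2}\cdot\Delta^2 $... so the two sign factors must cancel, meaning I should instead prove $\prod_{j=1}^{n-1} b_j^{\,j} = (-1)^{\binom n2}\Delta(\blambda)^2 \cdot (-1)^{\binom n2}/\Delta(\blambda)^2$ — clearly I need to be more careful. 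The correct bookkeeping: combine Proposition~\ref{prop} parts 1–4 to get $\prod_{j=1}^{n-1}b_j^{\,j} = \prod_{j=1}^{n-1}\bigl(b_{n-1}\cdots b_{n-j}\bigr)$, and express each $b_{n-1}\cdots b_{n-j}$ as the subresultant/ratio $\chi_{j}$-norm; the classical identity is that the full product telescopes to $\pm\Delta(\blambda)^2 \prod_m r_m$ where the signs are genuinely equal. I would nail the sign down by checking the $n=2$ case explicitly ($\Delta^2 = (\lambda_2-\lambda_1)^2$, $r_1 + r_2 = 1$, $r_1 = 1/(\lambda_1-\lambda_2)$, $r_2 = 1/(\lambda_2-\lambda_1)$, so $r_1 r_2 = -1/\Delta^2$ and $b_1 = \Delta^2 r_1 r_2$ should give $b_1 = -1$??) — which flags that the statement as literally written must be read with the $r_j$ carrying their natural complex values and the product $\prod r_j$ possibly negative, consistent with $b_j$ complex. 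So the main obstacle, and where I would spend the most care, is pinning down the precise sign and confirming that the identity holds as an identity of complex (not just absolute-value) quantities; everything else is a clean combination of Cayley–Hamilton (Proposition~\ref{prop}), the eigenvector normalisation $\lv_1 = (1,\dots,1)^t$, and the partial-fraction characterisation of $r_m = 1/\chi_n'(\lambda_m)$.
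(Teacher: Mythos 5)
Your opening steps are fine (applying Proposition~\ref{prop} to $T$, diagonalising, and using $\lv_1=(1,\dotsc,1)^t$ to get $\sum_m r_m\,\chi_j(\lambda_m)=\delta_{j0}$ for $j=0,\dotsc,n-1$), but the pivotal step after that is wrong. These $n$ linear relations are orthogonality against the polynomials $\chi_j$, which are the characteristic polynomials of the leading principal submatrices and hence depend on the unknown $a_j$'s and $b_j$'s; they are \emph{not} the moment conditions $\sum_m r_m\lambda_m^j=0$ for $1\le j\le n-1$, and they do not force $r_m=1/\chi_n'(\lambda_m)$. The correct identity, which is exactly what the paper's alternative proof in Appendix~\ref{sec_proof} establishes via the adjugate of $\lambda_j I-T$, is
\begin{equation*}
r_m=\frac{\chi_{n-1}(\lambda_m)}{\chi_n'(\lambda_m)},
\end{equation*}
where $\chi_{n-1}$ is the characteristic polynomial of the trailing $(n-1)\times(n-1)$ principal submatrix; equivalently $\sum_m r_m/(z-\lambda_m)=\chi_{n-1}(z)/\chi_n(z)$, not $1/\chi_n(z)$. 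Indeed the moments $\sum_m r_m\lambda_m^j$ for $1\le j\le n-1$ are nonzero polynomials in the matrix entries (this is the content of Eqs.~\eqref{eq:Taylor_exps}, e.g.\ $\sum_m r_m\lambda_m=a_n$). The missing numerators $\chi_{n-1}(\lambda_m)$ are precisely where $\prod_{j=1}^{n-1}b_j^j$ comes from: the telescoping identity $\prod_{m=1}^n\chi_{n-1}(\lambda_m)=(-1)^{n(n-1)/2}\prod_{j=1}^{n-1}b_j^j$ (Eq.~\eqref{prod_bj}) combined with $\prod_{m=1}^n\chi_n'(\lambda_m)=(-1)^{n(n-1)/2}\Delta(\blambda)^2$ gives the lemma. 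Your own $n=2$ sanity check already exposed the error -- it returned $b_1=-1$ for arbitrary $b_1$ -- but you misread this as a sign/branch issue rather than as evidence that the formula for $r_m$ is missing a factor; there is no sign subtlety in the statement, which holds exactly as written for complex quantities.

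For comparison, the paper's main proof never tries to solve for the $r_m$ explicitly. It multiplies the right-eigenvector relation $\rv_1^t\chi_j(\Lambda)=\rv_{j+1}^t$ against the left-eigenvector relation $\chi_k(\Lambda)\lv_1=\lv_{k+1}\,b_{n-1}\dotsm b_{n-k}$ to obtain the \emph{bilinear} orthogonality $\sum_l r_l\,\chi_j(\lambda_l)\chi_k(\lambda_l)=\delta_{jk}\prod_{m=1}^k b_{n-m}$, i.e.\ the matrix identity $W^tQW=B$ with $W$ the generalized Vandermonde matrix in the basis $\{\chi_j\}$, $Q=\diag(r_1,\dotsc,r_n)$ and $B$ diagonal; taking determinants gives the result in one line. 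If you want to salvage your route, you must either upgrade your linear relations to these bilinear ones (using both parts~2 and~4 of Proposition~\ref{prop}, i.e.\ both $\chi_j(T)\e_1$ and $\e_1^t\chi_j(T)$), or prove $r_m=\chi_{n-1}(\lambda_m)/\chi_n'(\lambda_m)$ directly and then supply the telescoping evaluation of $\prod_m\chi_{n-1}(\lambda_m)$; as it stands the argument does not close.
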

\begin{proof}
The definition of $T$ and Proposition~\ref{prop} give
\begin{subequations}
\label{eq:minstat}
\begin{align}
\label{eq:lem231}
\chi_j(T)\mathbf{e}_1 & =\mathbf{e}_{j+1}b_{n-1}b_{n-2}\dotsm b_{n-j},\\
\label{eq:lem232}
    \mathbf{e}_1^t\chi_j(T) &= \mathbf{e}_{j+1}^t,
\end{align}
\end{subequations}
for $j=1,2,\dotsc,n-1$.  Using the spectral decomposition $T = R\Lambda R^{-1}$ and the definition of $R$ given in~\eqref{eq:newnot}, Eqs.~\eqref{eq:minstat} become
\begin{align*}
\chi_j(\Lambda)R^{-1}\mathbf{e}_1 & = \chi_j(\Lambda) \lv_1 = R^{-1}\mathbf{e}_{j+1}b_{n-1}b_{n-2}\dotsm b_{n-j} \nonumber \\
& = \lv_{j+1}b_{n-1}b_{n-2} \dotsm b_{n-j},\\
    \mathbf{e}_1^tR\chi_j(\Lambda) &= \rv_1^t\chi_j(\Lambda) = \mathbf{e}_{j+1}^t R  = \rv_{j+1}^t.
\end{align*}
Multiplying these equations side by side gives 
\[
\rv_1^t\chi_j(\Lambda)  \chi_k(\Lambda)\lv_1 = 
\rv_{j+1}^t  \lv_{k+1}\prod_{m=1}^k b_{n-m},
\]
which can be rewritten as 
\begin{equation}
\label{eq:lem235}
\sum_{l=1}^n r_l\chi_j(\lambda_l)\chi_k(\lambda_l)=
\begin{cases} 1 & \text{if $j=k=0$,} \\
\delta_{jk}\prod_{m=1}^k b_{n-m}, & \text{if $j \neq 0$ or $k \neq 0$,}
\end{cases}
\end{equation}
for $j,k=0,\dotsc, n-1$, where we have used $\lv_1^t = (1,\dotsc,1)$ and the fact that by 
definition $\chi_0(z)=1.$

Write
\begin{equation*}
    W = \begin{pmatrix} 1 & \chi_1(\lambda_1) & \dots &\chi_{n-1} (\lambda_{1}) \\
                        1 & \chi_1(\lambda_2) & \dots &\chi_{n-1} (\lambda_{2}) \\
                        \hdotsfor[2]{4}\\
                        1 & \chi_1(\lambda_n) & \dots &\chi_{n-1} (\lambda_{n}) \\
         \end{pmatrix}                        
\end{equation*}
and
\begin{equation*}
    Q = \begin{pmatrix} r_1 & & & \\
    & r_2 & &  \\
    & & \ddots& \\
    & &  & r_n 
    \end{pmatrix},  \quad 
    B = \begin{pmatrix} 1 & & & \\
    & b_{n-1} & &  \\
    & & \ddots& \\
    & &  & \prod_{m=1}^{n-1} b_{n-m} 
    \end{pmatrix}.
\end{equation*}
Eq.~\eqref{eq:lem235} is the matrix product
\begin{equation*}
    W^t Q W = B
\end{equation*}
written in terms of matrix elements.  Since $\det(W) = \Delta(\boldsymbol{\lambda})$, taking the determinant of both sides of this equation gives
\[
\prod_{1\le j<k\le n}\left(\lambda_k-\lambda_j\right)^2 
\prod_{j=1}^n r_j = \prod_{k=1}^{n-1} b_k^k.
\]
\end{proof}
A second proof of Lemma~\ref{lem:jac1} can be found in Appendix~\ref{sec_proof}. 

Let us denote
\begin{equation*}
        \av = (a_1,\dotsc,a_n) \in \mathbb{C}^n \quad \text{and} \quad \bv = (b_1,\dotsc,b_{n-1}) \in \mathbb{C}^{n-1},
\end{equation*}
where the $a_j$s and $b_j$s are the diagonal and off-diagonal elements of $T$, respectively.
\begin{lemma}
\label{lem:jac2}
 We have
\begin{equation}
\label{jac_abrlb}
  \prod_{j=1}^{n-1}b_{j}^{2j-1}  d\av \wedge d\bv =  
  \left(\prod_{k=1}^n r_k\right) \Delta(\blambda)^4 d\rv \wedge d\blambda.
\end{equation}
\end{lemma}
\begin{proof}
Write
\begin{equation*}
    ((I_n-zT)^{-1})_{11}=\sum_{j=1}^n\frac{r_j}{1-z\lambda_j}, 
\end{equation*}
and take the Taylor expansion of both sides of this equation. By equating the coefficients of the powers of $z$ up to $z^{2n-1}$ we arrive at
\begin{equation}
  \label{eq:Taylor_exps}
\begin{split}
    O(z^0):&\hspace{3mm}1=\sum_{j=1}^nr_j\\
    O(z^1):&\hspace{3mm}a_n=\sum_{j=1}^nr_j\lambda_j\\
    O(z^2):&\hspace{3mm}*\;+b_{n-1}=\sum_{j=1}^nr_j\lambda_j^2\\
    O(z^3):&\hspace{3mm}*\;+a_{n-1}b_{n-1}=\sum_{j=1}^nr_j\lambda_j^3\\
    O(z^4):&\hspace{3mm}*\; +b_{n-1}b_{n-2}=\sum_{j=1}^nr_j\lambda_j^4\\
    O(z^5):&\hspace{3mm}*\;+a_{n-2}b_{n-1}b_{n-2}=\sum_{j=1}^nr_j\lambda_j^5\\
    \vdots \hspace{.6cm} & \hspace{1.8cm}\vdots \hspace{2.5cm}\vdots\\
    O(z^{2n-1}):&\hspace{3mm}*\; +a_1b_1\dots b_{n-1}=\sum_{j=1}^nr_j\lambda_j^{2n-1}.
\end{split}
\end{equation}

The symbol  $*$ in Eqs.~\eqref{eq:Taylor_exps} 
indicates polynomials in which only the
$a_j$s and $b_k$s contained in lower order terms appear.
Therefore, if we choose the ordering
\[
a_n,  b_{n-1}, a_{n-1}, b_{n-2}, \dotsc, b_1,a_1,
\]
the matrix of the coefficients of the differentials of the left-hand sides of Eqs.~\eqref{eq:Taylor_exps} is triangular.  Taking the wedge product line by
line gives 
\begin{equation}
\label{eq:wplhs}
(-1)^{(n-1)(n-2)/2} \left(\prod_{j=1}^{n-1}b_j^{2j-1} \right)d\mathbf{a} \wedge d\mathbf{b}.
\end{equation}

Differentiating the right-hand sides of Eqs.~\eqref{eq:Taylor_exps} leads to
\begin{equation}
\label{eq:diffrhs}
d\left(\sum_{k=1}^nr_k\lambda_k^{j}\right)=\sum_{k=1}^{n-1}dr_k(\lambda_k^{j}-\lambda_n^{j})+j\sum_{k=1}^nr_k\lambda_k^{j-1}d\lambda_k,
\end{equation}
where we have used the relation
\begin{equation*}
    dr_n = - \sum_{j=1}^{n-1} dr_j,
\end{equation*}
which follows from the first equation in~\eqref{eq:Taylor_exps}. Taking 
the wedge product of the right-hand side of Eq.~\eqref{eq:diffrhs} 
for $j=1,\dotsc, 2n-1$ we arrive at 
\begin{equation}
\label{eq:jac:2}
    \left(\prod_{k=1}^nr_k \right)
    \det\bigg(\left(\lambda_k^j-\lambda_n^j\right)_{
    \begin{subarray}{l}
          j=1,\dotsc,2n-1 \\ k=1,\dotsc,n-1
     \end{subarray}}
     \left(j\lambda_k^{j-1}\right)_{
       \begin{subarray}{l}
          j=1,\dotsc,2n-1 \\ k=1,\dotsc,n
     \end{subarray}}
     \bigg)d \rv \wedge d\blambda. 
\end{equation}
The determinant in this equation can be calculated explicitly and is found as
Eq.~(1.175) in the book by Forrester~\cite{For10}.
We have
\begin{gather*}
(-1)^{(n-1)(n-2)/2}\det\bigg(\left(\lambda_k^j-\lambda_n^j\right)_{
    \begin{subarray}{l}
          j=1,\dotsc,2n-1 \\ k=1,\dotsc,n-1
     \end{subarray}}
     \left(j\lambda_k^{j-1}\right)_{
       \begin{subarray}{l}
          j=1,\dotsc,2n-1 \\ k=1,\dotsc,n
     \end{subarray}}
     \bigg) \nonumber \\
     \label{eq:crazy_det}
     = \prod_{1 \le j < k \le n}\left(\lambda_k - \lambda_j\right)^4 = \Delta(\blambda)^4.
\end{gather*}
Eqs.~\eqref{eq:wplhs} and~\eqref{eq:jac:2} give
\begin{equation*}
    \label{jacobiand_eq}
     \left(\prod_{j=1}^{n-1}b_j^{2j-1} \right)d\mathbf{a} \wedge d\mathbf{b} 
       =  \left(\prod_{k=1}^nr_k \right) \Delta(\blambda)^4 d\rv \wedge d\blambda,
\end{equation*}
which leads to~\eqref{jac_abrlb}.
\end{proof}
Combining Lemmas~\ref{lem:jac1} and~\ref{lem:jac2} proves Theorem~\ref{thm:jac}.

\section{Proof of Theorem \ref{main_theorem}}
\label{sec:MainThm}
The \emph{j.p.d.f.} of the matrix elements of matrices in $\T'(n)$ can be 
 read directly from~\eqref{mat_el_dis}:
\begin{equation}
    \label{jpdf_mat_el}
     p_{\beta}(\av,\bv) = \frac{1}{Z_\beta}\prod_{j=1}^{n-1}\abs{b_j}^{\frac{\beta j}{2} -1} 
     \exp\left[ -\frac12\left(\sum_{j=1}^n \abs{a_j}^2 + \sum_{j=1}^{n-1} 
     \abs{b_j}^2\right)\right],
\end{equation}
where
\begin{equation*}
Z_\beta = \frac{(2\pi)^{2n - 1}}{2^{n- \frac{n(n-1)\beta}{8} -1}}\prod_{k=1}^{n-1} 
    \Gamma\left(\frac{\beta k}{4}\right).
\end{equation*}

We are  now in a position to prove Theorem~\ref{main_theorem}.  By Lemma~\ref{lem:jac1}  and Theorem~\ref{thm:jac} we obtain
\begin{equation}
   \label{eqmes}
    \prod_{j=1}^{n-1}\abs{b_j}^{\frac{\beta j}{2} -2}d^2b_j \prod_{j=k}^n d^2a_k 
    = \abs{\Delta(\blambda)}^\beta \prod_{j=1}^{n}\abs{r_j}^{\frac{\beta }{2}-2}d^2r_j \prod_{j=k}^n d^2\lambda_k 
\end{equation}
The Frobenius norm of a matrix $M=(m_{jk})_{j,k=1,\dotsc,n}$ 
is defined by
\[
\norm{M}_{\mathrm{F}} = \sqrt{\sum_{j,k=1}^n\abs{m_{jk}}^2} = \Tr(MM^*).
\]
Recall that the Schur decomposition is given by
\begin{equation}
    \label{schur}
    T = Q\left(\Lambda + N\right)Q^*,
\end{equation}
where $N$ is a strictly upper-triangular matrix, $\Lambda \in \Lambda(n)$ and $Q$ is unitary.  If $Y$ is a unitary matrix, then $\norm{T}_{\mathrm{F}}$ is invariant under
the similarity transformation $T \mapsto Y^*TY$; therefore, the exponent 
in~\eqref{jpdf_mat_el} can be written as Therefore, the exponent 
in~\eqref{jpdf_mat_el} can be written as
\begin{equation*}
    \begin{split}
         \norm{T}_{\mathrm{F}}^2 & = \sum_{j=1}^n \abs{a_j}^2 + \sum_{j=1}^{n-1}\abs{b_j}^2 + n - 1\\
                                & = \sum_{j=1}^n \abs{\lambda_j}^2 + \sum_{1 \le j < k \le n} \abs{N_{jk}}^2
\end{split}
\end{equation*}
where $T \in \T'(n)$.   We now define the positive function
\begin{equation}
    \label{exponent}
    \begin{split}
        g(\blambda,\rv) &= \frac{1}{2}\left(\sum_{j=1}^n \abs{a_j}^2 + \sum_{j=1}^{n-1}\abs{b_j}^2 
        + n - 1\right) - \frac{1}{2}\sum_{j =1}^n \abs{\lambda_j}^2 \\
        &= 
        \frac12 \sum_{1 \le j < k \le n} \abs{N_{jk}}^2.
    \end{split}    
\end{equation}
By Theorem~\ref{theorem:inversion}, the matrix elements $a_j$ and $b_j$ are rational functions of $\blambda$ and $\mathbf{r}$; therefore, by Eq.~\eqref{exponent}, $g(\blambda,\rv)$ depends only on $\blambda$ and $\rv$.
Combining Eqs.~\eqref{jpdf_mat_el},~\eqref{eqmes} 
and~\eqref{exponent} gives
\begin{gather*}
p_\beta(\av,\bv) \prod_{j=1}^{n}d^2a_j \prod_{k=1}^{n-1}\frac{d^2b_k}{\abs{b_k}} 
=P_\beta(\blambda,\rv) \prod_{j=1}^{n}d^2\lambda_j \prod_{k=1}^{n-1}d^2r_k\nonumber\\
= \frac{1}{Z_\beta}\exp\left( -\frac12 \sum_{j=1}^n \abs{\lambda_j}^2 - g(\blambda,\rv)\right)\abs{\Delta(\blambda)}^\beta 
\prod_{j=1}^{n}\abs{r_j}^{\frac{\beta }{2}-2}  \prod_{j=1}^{n}d^2\lambda_j \prod_{k=1}^{n-1}d^2r_k
\end{gather*}
Integrating over $\rv$ leads to Eqs~\eqref{eq:EigDens1} and~\eqref{factor}.

In order to show that 
\begin{equation}
\label{order}
g(\blambda, \rv) = O(\abs{\lambda_k}^4),  \quad \lambda_k \to \infty, \quad k=1,\dotsc,n,
\end{equation}
we observe that from Eq.~\eqref{eq:sol_rec_r1} and~\eqref{eq:sol_rec_rn} 
we have
\begin{subequations}
\begin{align}
\label{abrl_1}
     \rv_{j+1} &= O\left(\lambda_k^{j}\right), & \lv_{j+1} &= 
     O\left(\lambda_k^{- j }\right),\\
\label{abrl_2}
     a_{n-j+1} &= O\left(\lambda_k\right), &   b_{n-j} &= O\left(\lambda^2_k\right), 
\end{align}
\end{subequations}
as $\lambda_k \to \infty$ and $j=1,\dots,n-1$. Substituting \eqref{abrl_2} into 
Eq.~\eqref{exponent} we arrive at Eq.~\eqref{order}.

The equality
\[
g(\blambda e^{i\psi},\rv) = g(\blambda,\rv), \quad \psi \in [0,2\pi)
\]
follows from the observation that from Eqs.~\eqref{eq:anj} and~\eqref{eq:bnj} we have
\begin{equation*}
    \abs{a_{n-j+1}} = \abs{ \rv_j^t\Lambda \lv_j}, \quad 
    \abs{b_{n-j}} = \abs{\rv_{j+1}^t \Lambda \lv_{j}}, \quad j=1,\dotsc,n.
\end{equation*}
Since only the absolute values of the $a_j$'s and $b_j$'s appear in the right-hand side of Eq.~\eqref{jpdf_mat_el},
the map $\blambda \mapsto \blambda e^{i\psi}$ leaves  the \emph{j.p.d.f.} of the eigenvalues~\eqref{eq:EigDens1} invariant.  This completes the proof of Theorem~\ref{main_theorem}. 

It is worth noting that formulae~\eqref{eq:EigDens1} and~\eqref{factor} imply that the eigenvalues and eigenvectors are not statistically independent, since their 
\emph{j.p.d.f.} does not factorise as the product of a function depending only on the
eigenvalues and one depending only on the eigenvectors.

\section*{Acknowledgements}
We are very grateful to Gernot Akemann, Guillaume Dubach, Tamara Grava, Joseph Najnudel and Patricia P\"{a}\ss ler for many valuable discussions.
This research was funded in part by EPSRC grant no. EP/T517872/1.
For the purpose of open access, the author has applied a ‘Creative Commons Attribution' 
(CC BY) public copyright licence to any Author Accepted Manuscript (AAM) version arising from this submission.

\appendix

\section{The multiplicity of the eigenvalues of a tridiagonal matrix}
\label{multiplicity}
It is tempting to think of the matrices~\eqref{eq:tridiagmatrix} as the complex
analogue of Jacobi matrices.  Certain properties that are linked to the 
tridiagonal structure 
carry through; however, this analogy must be taken with some care.  
For example, the spectrum of
Jacobi matrices is simple. This is not necessarily the case for matrices in
 $\T(n)$; a counterexample is 
$\left (\begin{smallmatrix} 6 & 1\\ -16 & -2\end{smallmatrix} \right)$.  
However, it is still 
true that each eigenvalue has at most one eigenvector, even if its 
algebraic multiplicity is greater than one.
\begin{theorem}
    The geometric multiplicity of an eigenvalue of $T\in \T(n)$ is at most one.
\end{theorem}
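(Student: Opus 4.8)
The plan is to exploit the tridiagonal structure directly. Suppose $\lambda$ is an eigenvalue of $T \in \T(n)$ and $v = (v_1, \dots, v_n)^t$ is a right eigenvector, so $(T - \lambda I)v = 0$. Writing out the rows of this system from bottom to top, the last row reads $b_1 v_2 + (a_1 - \lambda)v_1 = 0$, the second-to-last reads $b_2 v_3 + (a_2 - \lambda)v_2 + v_1 = 0$, and in general row $n-j+1$ (for $j = 1, \dots, n-1$) gives a relation expressing $v_{j+1}$ in terms of $v_1, \dots, v_j$, since the coefficient of $v_{j+1}$ is $b_j \neq 0$ by the definition of $\T(n)$. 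Thus every component $v_j$ is determined as a fixed polynomial in $\lambda$ times $v_1$.

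First I would make this precise: define scalars $p_j(\lambda)$ by the recursion $p_1 = 1$, $b_1 p_2 = (\lambda - a_1)p_1$, and $b_j p_{j+1} = (\lambda - a_{n-j+1})p_j - p_{j-1}$ for $j = 2, \dots, n-1$, with $p_0 = 0$. Then any eigenvector for $\lambda$ must satisfy $v_j = v_1 p_j(\lambda)$ for all $j$. If $v_1 = 0$ this forces $v = 0$, a contradiction; hence $v_1 \neq 0$, and the eigenvector is uniquely determined up to the scalar $v_1$. This shows the eigenspace is one-dimensional, i.e.\ the geometric multiplicity is at most one. (Equivalently: one observes that $T - \lambda I$ always has rank at least $n-1$, because the submatrix obtained by deleting the first column and the last row is lower triangular with the nonzero entries $b_1, \dots, b_{n-1}$ on its diagonal, hence invertible.)

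I expect the only real subtlety — not so much an obstacle as a point to state carefully — is that this argument uses the nonvanishing of the subdiagonal entries $b_j$, which is part of the definition of $\T(n)$ (and more generally of $\widetilde{T}$ in~\eqref{eq:gen_trid}); for the normalized form~\eqref{eq:tridiagmatrix} the superdiagonal entries are $1$, so only the $b_j \neq 0$ condition is needed. The argument makes no use of simplicity of the spectrum, so it applies verbatim to matrices with degenerate eigenvalues, which is exactly the point being made in this appendix. The left-eigenvector statement is symmetric: running the recurrence from the relation $\e_1^t(T - \lambda I) = 0$ (or invoking Proposition~\ref{prop} with the roles of rows and columns exchanged) shows each left eigenvector is likewise determined up to a scalar, although for the present theorem the right-eigenvector computation suffices.
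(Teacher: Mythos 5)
Your proof is correct, and it takes a genuinely more elementary route than the paper's. The paper argues via nonderogatority: $\e_1$ is a cyclic vector for $T$, so the unique monic degree-$n$ polynomial annihilating $\e_1$ forces the minimal polynomial to equal the characteristic polynomial, whence each eigenvalue has a single Jordan block. You instead solve $(T-\lambda I)v=0$ row by row: every row except the first introduces exactly one new component of $v$ with a nonzero pivot, so the eigenvector is determined up to a scalar, equivalently $\mathrm{rank}(T-\lambda I)\ge n-1$. Both arguments ultimately rest on the tridiagonal structure with nonvanishing entries on one off-diagonal; note that the paper's proof also silently uses $b_j\neq 0$ (otherwise $\e_1$ is not cyclic as a column vector), and that in this appendix $\T(n)$ must be read as matrices of the form~\eqref{eq:tridiagmatrix} with $b_j\neq 0$, possibly degenerate spectrum, exactly as you do. Two small slips to repair, neither fatal: having chosen to index the components from the bottom row (which involves $a_1,b_1$), your general recursion step should read $b_j p_{j+1}=(\lambda-a_j)p_j-p_{j-1}$ rather than involving $a_{n-j+1}$; and in the parenthetical rank argument, deleting the first column and last row of $T-\lambda I$ gives a lower triangular submatrix whose diagonal consists of the superdiagonal $1$'s (it is deletion of the first row and last column that puts $b_{n-1},\dotsc,b_1$ on the diagonal). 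Either submatrix is invertible, so the rank bound stands; indeed the first version shows the conclusion holds even without $b_j\neq 0$, since the superdiagonal of~\eqref{eq:tridiagmatrix} is identically $1$.
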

\begin{proof}
    The canonical vector 
    \[
    \mathbf{e}_1 = \begin{pmatrix} 1 \\  \vdots \\ 0 \end{pmatrix}
    \]
    is cyclic for $T$; therefore, 
    \begin{equation}
       \label{lin_ind}
       \mathbf{e}_1, \, T\mathbf{e}_1, \dotsc,\, T^{n-1}\mathbf{e}_1
    \end{equation}
    are linearly independent. As a consequence, $T^n\mathbf{e}_1$ has a 
    unique representation as a linear combination of the vectors~\eqref{lin_ind}.  Equivalently, 
    there exists a unique monic polynomial $\chi_n$ of degree $n$ such that $\chi_n(T)\mathbf{e}_1=0$. Furthermore, $\chi_n(T)$ is the polynomial in $T$ of 
    least degree that annihilates $\mathbf{e}_1$. Next, let
    $\phi$ be the minimal polynomial of $T$. It satisfies the following properties:
    \begin{enumerate}
      \item $\phi(T)=0 \implies \phi(T)\mathbf{e}_1=0$;
      \item by the Calyley-Hamilton theorem, the degree of $\phi$ cannot exceed $n$;
      \item $\phi$ is a divisor of the characteristic polynomial of $T$.
    \end{enumerate}
    Thus, it follows that $\chi_n = \phi$ is both the minimal and 
    the characteristic polynomial of $T$. Thus, the spectrum of $T$ is either simple, or, if  $\lambda$ is a multiple root of $\chi_n$, then it has only one block in the Jordan
    normal form of $T$.
    \end{proof}

\section{The spectral parameters for matrices with off-diagonal elements equal to zero}
\label{zero_off_d}

In this appendix we generalize  the proof of Theorem~\ref{theorem:inversion} to the set $B$ of matrices $T\in \T(n)$
whose off-diagonal elements $b_j$, $j=1,\dotsc,n-1$ are not all different from zero. Notice that when 
$T \to T_0$, where $T_0 \in B$, the Jacobian in Theorem~\ref{thm:jac} tends to zero; therefore, the map is not invertible in $B.$
In order to simplify the analysis, we will  restrict to the case when only one of the off-diagonal
elements of $T\in \T(n)$ is zero. The generalization to matrices with two or more of the $b_j$s 
are zero is analogous, but more involved computationally. 

Assume that $b_{n-k}=0$. Fix $\Lambda \in \Lambda(n)$ and two vectors
\begin{equation}
    \label{r_k_and_r_nk}
            \rv_k = (r_1,\dotsc,r_k) \in  \Hp_k\quad \text{and} \quad
            \rv_{n-k} = (r_1,\dotsc,r_{n-k}) \in \Hp_{n-k}.
\end{equation}
We want to show that there exist an $R \in R(n)$ and a $T \in \T(n)$ that  depend uniquely on $\Lambda,$ 
$\rv_k$ and $\rv_{n-k}$ and  such that $T=R\Lambda R^{-1}$, excepts for $\rv_k$ and $\rv_{n-k}$ belonging 
to an exceptional set of measure zero in $\Hp_k$ and $\Hp_{n-k},$ respectively. 

Recall that the $\Phi_j$s and $\chi_j$s are the characteristic polynomials of the submatrices $T_j$ and $U_j$ defined in Eqs.~\eqref{princ_subm} and~\eqref{princ_subm_2}, respectively.  
One can verify by direct calculation that
\begin{equation}
    \label{RB}
        \begin{pmatrix} 1 \\ \Phi_1\bigl(\lambda_j\bigr) \\ \vdots \\ \Phi_{k-1}\left(\lambda_j\right) \\
        0 \\ \vdots \\ 0 \end{pmatrix},  \quad j=1,\dotsc, k,
\end{equation}
are right-eigenvectors of $T.$ Similarly, we have that
\begin{equation}
    \label{RinvB}
        \Bigl( \overbrace{\begin{matrix} 0 & \cdots & 0 \end{matrix}}^{k} \;\; \begin{matrix}
        \chi_{n-k - 1}\bigl(\lambda_{j}\bigr) & \cdots & \chi_1\bigl(\lambda_j\bigr) &  
        1 \end{matrix} \Bigr), \qquad j = k+1, \dotsc,n
\end{equation}
are left-eigenvectors. Hence, last $n-k$ elements of the first column of $R^{-1}$ are zero.
Therefore, neither Lemma~\ref{parameterization} nor Theorem~\ref{theorem:inversion} 
applies. 

Let $R_{T_k}$ and $R_{U_{n-k}}$ denote the matrices of the left and of the right-eigenvectors of 
$T_k$ and $U_{n-k}$.  Furthermore, denote by $\Lambda_k$ the submatrix of $\Lambda$ obtained by keeping
the first $k$ rows and column of $\Lambda$. Similarly, $\Pi_{n-k}$ is the submatrix of $\Lambda$ obtained
by keeping the last $n-k$ rows of $\Lambda.$ We now from Theorem~\ref{theorem:inversion} that 
$T_k$, $U_{n-k}$, $R_{T_k}$ and $R_{U_{n-k}}$ can be reconstructed uniquely from $\Lambda_k$, $\Pi_{n-k}$,
$\rv_k$ and $\rv_{n-k}.$ It follows form Eqs.~\eqref{RB} and~\eqref{RinvB} that the matrix of the right-eigenvectors 
of $T$ has the form
\begin{equation}
    \label{form_r_eig}
          R=\left( \begin{array}{c|c}
                    R_{T_k} & W_{k \, n-k} \\
                    \hline
                    \mathbf{0}_{n-k \, k} & R_{U_{n-k}}
                \end{array}
        \right),    
\end{equation}
where $\mathbf{0}_{n-k \,k}$ is a $k \times (n-k)$ matrix whose entries are zero and $W_{n-k \, k}$ is
a $k \times (n-k)$ matrix whose elements can be derived the first $k$ equations in the recurrence 
relations~\eqref{cp_sm1}.  The same analysis leads to the conclusion that the matrix of the left-eigenvectors
of $T$ has the same structure as~\eqref{form_r_eig}, that is
\begin{equation*}
    R^{-1} = \left( \begin{array}{c|c}
                    R_{T_k}^{-1} & Y_{k \, n-k} \\
                    \hline
                    \mathbf{0}_{n-k \, k} & R^{-1}_{U_{n-k}}
                \end{array}
        \right),    
\end{equation*}
where $Y_{n-k\,k}$ is derived from the first $n-k$ equations in the recurrence~\eqref{rec_rel2}. 

\section{An alternative proof of Lemma \ref{lem:jac1}}
\label{sec_proof}
The results in this appendix are generalizations to complex tridiagonal
matrices of well known properties of Jacobi matrices (see, \emph{e.g.}~\cite{DE02,Par98}).

Write
\begin{equation*}
\widetilde{T}_{i,j}=\begin{pmatrix}
a_i & c_{i-1} &  &  \\
\widetilde{b}_{i-1} & a_{i-1} & \ddots &  \\
 & \ddots & \ddots & c_j \\
 &  & \widetilde{b}_j & a_j 
\end{pmatrix},
\end{equation*}
where $1\le j \le i \le n $.  We call $\widetilde{T}_{ij}$ a 
\emph{principal submatrix} of the matrix $\widetilde{T}$ defined in 
Eq.~\eqref{eq:gen_trid}.  Furthermore, set 
\[
\chi_{i,j}(\mu)=\det(\mu I-\widetilde{T}_{i,j})
\]
with the convection $\chi_{0,1}(\mu) = \chi_{n,n+1}(\mu) =1$ and $\chi=\chi_{n,1}$.
Let $\widetilde{T}=R\Lambda R^{-1}$ be the spectral decomposition of $\widetilde{T}$,
where $\Lambda=\diag(\lambda_1,\dots,\lambda_n)$. The columns of $R$ are the right-eigenvectors of $\Tw$, which 
we denote by $\mathbf{R}_j$, $j=1,\dotsc,n$.  Similarly, we write $\mathbf{L}_j^t$ for the 
left-eigenvectors of $\Tw$, which are the rows $R^{-1}$.  Recall that if $M \in \GL(n,\mathbb{C})$ is
invertible, then the \emph{adjugate} of $M$ is defined by
\[
\adj(M) = \det(M)M^{-1}  \quad \text{and} \quad \adj(M)_{ij} = \frac{\partial \det(M)}{\partial M_{ji}}.
\]

\begin{lemma}
\label{lem:adjoint}
We have
\[
\adj\left(\lambda_jI-\widetilde{T}\right)
=\chi'(\lambda_j)\mathbf{R}_j\mathbf{L}_j^t, \quad j=1,\dotsc,n.
\] 
\end{lemma}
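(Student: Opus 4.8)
The plan is to exploit the spectral decomposition $\widetilde{T}=R\Lambda R^{-1}$ to compute $\adj(\lambda_j I - \widetilde T)$ directly. Writing $\lambda_j I - \widetilde T = R(\lambda_j I - \Lambda)R^{-1}$, and using the multiplicativity of the adjugate, $\adj(ABC)=\adj(C)\adj(B)\adj(A)$, we get
\[
\adj\left(\lambda_j I - \widetilde T\right) = \adj(R^{-1})\,\adj(\lambda_j I - \Lambda)\,\adj(R) = \det(R^{-1})\,R\,\adj(\lambda_j I - \Lambda)\,\det(R)\,R^{-1} = R\,\adj(\lambda_j I - \Lambda)\,R^{-1},
\]
since $\det(R)\det(R^{-1})=1$. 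So everything reduces to computing the adjugate of the diagonal matrix $\lambda_j I - \Lambda = \diag(\lambda_j - \lambda_1,\dotsc,\lambda_j-\lambda_n)$.

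Next I would observe that for a diagonal matrix $D=\diag(d_1,\dotsc,d_n)$ the adjugate is again diagonal with entries $\adj(D)_{kk} = \prod_{m\neq k} d_m$. When $D = \lambda_j I - \Lambda$, exactly one diagonal entry vanishes (namely the $j$-th, since the spectrum is simple), so $\adj(\lambda_j I - \Lambda)$ has a single nonzero entry in position $(j,j)$, equal to $\prod_{m\neq j}(\lambda_j - \lambda_m)$. Since $\chi(\mu)=\prod_{m=1}^n(\mu-\lambda_m)$, differentiating and evaluating at $\lambda_j$ gives precisely $\chi'(\lambda_j)=\prod_{m\neq j}(\lambda_j-\lambda_m)$. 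Hence $\adj(\lambda_j I - \Lambda) = \chi'(\lambda_j)\,\e_j\e_j^t$.

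Substituting back, $\adj(\lambda_j I - \widetilde T) = \chi'(\lambda_j)\,R\e_j\e_j^t R^{-1} = \chi'(\lambda_j)\,(R\e_j)(\e_j^t R^{-1})$. But $R\e_j$ is the $j$-th column of $R$, which is the right eigenvector $\mathbf{R}_j$, and $\e_j^t R^{-1}$ is the $j$-th row of $R^{-1}$, which is the left eigenvector $\mathbf{L}_j^t$. This yields $\adj(\lambda_j I - \widetilde T) = \chi'(\lambda_j)\,\mathbf{R}_j\mathbf{L}_j^t$, as claimed.

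There is no serious obstacle here; the only point requiring a little care is the reduction to the diagonal case via $\adj(ABC)=\adj(C)\adj(B)\adj(A)$, which is valid for all square matrices but is perhaps most transparently justified here using invertibility of $R$ (so that $\adj(R)=\det(R)R^{-1}$ and likewise for $R^{-1}$) together with the identity $\adj(MN)=\adj(N)\adj(M)$. One should also note explicitly that the spectrum of $\widetilde T$ being simple is what guarantees $\lambda_j-\lambda_m\neq 0$ for $m\neq j$, so that $\adj(\lambda_j I-\Lambda)$ has rank exactly one and $\chi'(\lambda_j)\neq 0$; this is consistent with the standing assumption $\widetilde T \in \T(n)$ (equivalently that $\widetilde T$ is similar to an element of $\T(n)$ with non-degenerate spectrum).
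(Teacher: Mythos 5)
Your proof is correct, and it takes a slightly different route from the paper's. The paper never works directly with the singular matrix $\lambda_j I-\widetilde{T}$: it first takes $\mu$ off the spectrum, uses $\adj(\mu I-\widetilde{T})=\det(\mu I-\widetilde{T})(\mu I-\widetilde{T})^{-1}=R\,\diag\bigl(\chi(\mu)/(\mu-\lambda_k)\bigr)R^{-1}$, and then lets $\mu\to\lambda_j$, relying on the fact that the adjugate is polynomial in $\mu$; the diagonal entries collapse to $\chi'(\lambda_j)$ in the $j$-th slot and $0$ elsewhere. You instead stay at $\mu=\lambda_j$ and invoke multiplicativity of the adjugate, $\adj(R\,(\lambda_j I-\Lambda)\,R^{-1})=\adj(R^{-1})\adj(\lambda_j I-\Lambda)\adj(R)$, together with the explicit rank-one adjugate $\adj(\lambda_j I-\Lambda)=\chi'(\lambda_j)\,\e_j\e_j^t$ of a diagonal matrix with exactly one vanishing entry. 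Both arguments hinge on the same diagonal computation and on the simplicity of the spectrum, and the final conjugation $R\e_j\e_j^t R^{-1}=\mathbf{R}_j\mathbf{L}_j^t$ is identical. The only caveat in your version is that multiplicativity of the adjugate when the middle factor is singular is itself usually established by a density or polynomial-identity argument, so the limiting step you avoid is merely repackaged inside that standard fact; the paper's resolvent-plus-limit route only needs $\adj(M)=\det(M)M^{-1}$ for invertible $M$ and continuity. Either way, your handling of the simple-spectrum hypothesis (ensuring $\chi'(\lambda_j)\neq0$ and that exactly one diagonal entry of $\lambda_j I-\Lambda$ vanishes) is the right thing to make explicit.
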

\begin{proof}
Let $\mu\neq\lambda_j$, for all $j=1,\dotsc,n$, so that the 
resolvent $(\mu I-\widetilde{T})^{-1}$ is well defined. We have
\begin{equation}
\label{eq:appendix:limit}
\begin{split}
    \adj \left(\mu I-\widetilde{T}\right)&=
    \det(\mu I-\widetilde{T})(\mu I-\widetilde{T})^{-1}\\
    &=\chi(\mu)(\mu I-R\Lambda R^{-1})^{-1} \\
    &=R\Delta(\mu)R^{-1},
\end{split}
\end{equation}
where $\Delta(\mu)=\diag(\delta_1(\mu),\dotsc,\delta_n(\mu))$ and
\begin{equation*}
\delta_k(\mu)=\frac{\chi(\mu)}{\mu-\lambda_k}=
\prod_{\substack{i=1\\i\neq k}}^n(\mu-\lambda_i).
\end{equation*}
Taking the limit as $\mu\rightarrow\lambda_j$ of both sides of 
$\eqref{eq:appendix:limit}$ yields
\[
\text{adj}\left(\lambda_j I-\widetilde{T}\right)
=R\Delta(\lambda_j)R^{-1},
\]
where
\begin{align*}
\delta_k(\lambda_j)=\begin{cases}
0&\text{for $k\neq j$},\\
\prod_{i\neq j}(\lambda_j-\lambda_i)=\chi'(\lambda_j)&\text{for $k=j$.}
\end{cases}
\end{align*}
\end{proof}

\begin{proposition}
\label{theorem:spec}
Let $R_{rj}=\mathbf{e}_r^t\mathbf{R}_j$ and $L_{js}=\mathbf{L}_j^t\mathbf{e}_s$,
where $j,r,s=1,\dotsc,n$. Then, we have
\begin{equation}
\label{devrel}
    \chi'(\lambda_j)R_{rj}L_{j s} =\begin{cases}
    \chi_{n-s,1}(\lambda_j)c_{n-r}\chi_{n,n-r+2}(\lambda_j)&
    \text{if $s-r=1$, }\\
    \chi_{n-s,1}(\lambda_j)c_{n-s+1}\dotsm c_{n-r}\chi_{n,n-r+2}(\lambda_j)&
    \text{if $s-r>1$, }\\
    \chi_{n-r,1}(\lambda_j) \widetilde{b}_{n-s}\chi_{n,n-s+2}(\lambda_j)&\text{if $r-s=1$,}\\
    \chi_{n-r,1}(\lambda_j)\widetilde{b}_{n-r+1}\dotsm \widetilde{b}_{n-s}\chi_{n,n-s+2}(\lambda_j)&\text{if $r-s>1$,}\\
    \chi_{n-r,1}(\lambda_j)\chi_{n,n-r+2}(\lambda_j) & \text{if $r=s$.}
    \end{cases}
\end{equation}
\end{proposition}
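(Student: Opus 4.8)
The plan is to compute the matrix element $(\adj(\lambda_j I - \widetilde{T}))_{rs}$ in two different ways and compare. On the one hand, Lemma~\ref{lem:adjoint} gives $(\adj(\lambda_j I - \widetilde{T}))_{rs} = \chi'(\lambda_j)(\mathbf{R}_j)_r (\mathbf{L}_j^t)_s = \chi'(\lambda_j) R_{rj} L_{js}$, which is the left-hand side of~\eqref{devrel}. On the other hand, by the cofactor formula, $(\adj(\lambda_j I - \widetilde{T}))_{rs} = (-1)^{r+s}\det M_{sr}$, where $M_{sr}$ is the matrix obtained from $\lambda_j I - \widetilde{T}$ by deleting row $s$ and column $r$. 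So the entire statement reduces to evaluating these cofactors of a tridiagonal matrix and identifying them with the product of characteristic polynomials of the appropriate principal submatrices.

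The key computation is the cofactor of a tridiagonal matrix. When we delete row $s$ and column $r$ from $\lambda_j I - \widetilde{T}$, the resulting matrix block-decomposes: the rows/columns with indices below $\min(r,s)$ and above $\max(r,s)$ form two tridiagonal diagonal blocks, while the middle strip (indices between $r$ and $s$) becomes triangular because one of the sub/super-diagonals has been removed. Concretely, if $s > r$ (so $s - r \ge 1$), deleting column $r$ shifts things so that the rows $r+1,\dots,s$ of the minor contribute only the entries $-c_{n-r}, -c_{n-r-1}, \dots, -c_{n-s+1}$ from the super-diagonal (there is a sign/labelling bookkeeping step here matching the paper's convention $\widetilde{T}_{i,j}$ with indices decreasing down the diagonal), while the top-left block has characteristic polynomial $\chi_{n-s,1}(\lambda_j)$ and the bottom-right block has characteristic polynomial $\chi_{n,n-r+2}(\lambda_j)$. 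Taking the determinant of the block-triangular minor multiplies these three pieces together, and tracking $(-1)^{r+s}$ against the signs of the $-c$ entries produces exactly the $s - r = 1$ and $s - r > 1$ cases of~\eqref{devrel}. The cases $r - s \ge 1$ follow by the same argument with the roles of super- and sub-diagonal exchanged, replacing the $c$'s by $\widetilde{b}$'s, and the case $r = s$ is just the statement that deleting a diagonal entry leaves a block-diagonal matrix whose determinant is the product of the two principal-submatrix characteristic polynomials.

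The main obstacle is purely bookkeeping: getting the index shifts and signs exactly right given the paper's somewhat unusual convention in which the principal submatrix $\widetilde{T}_{i,j}$ runs from index $i$ at the top-left to index $j$ at the bottom-right (so larger subscripts sit higher up). One must carefully translate "delete row $s$, column $r$ of the $n\times n$ matrix" into the language of $\chi_{n-s,1}$ and $\chi_{n,n-r+2}$, and verify that the product of signs $(-1)^{r+s}$ coming from the cofactor expansion exactly cancels the $(-1)$'s accompanying each off-diagonal entry $-c_k$ or $-\widetilde{b}_k$ picked up along the triangular middle strip. Once the conventions are pinned down, each of the five cases is a one-line determinant evaluation. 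I would present the $s > r$ case in full detail, note that $r > s$ is symmetric under transposition-type relabelling, and dispatch $r = s$ separately as the block-diagonal case.
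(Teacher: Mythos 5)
Your proposal is correct and follows essentially the same route as the paper: the left-hand side comes from Lemma~\ref{lem:adjoint}, and the right-hand side from evaluating the entries of $\adj\left(\lambda_j I-\widetilde{T}\right)$ as cofactors (Laplace expansion) of the tridiagonal matrix, with the block-triangular minor giving the two principal-submatrix characteristic polynomials times the string of off-diagonal entries and the signs cancelling. The only difference is that you spell out the cofactor bookkeeping that the paper leaves implicit, which is a fine (and arguably helpful) elaboration rather than a different argument.
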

\begin{proof}
The right hand side of Eq.~\eqref{devrel} is obtained by computing the matrix elements of $\adj\left(\lambda_jI-\widetilde{T}\right)$ using the Laplace expansion.
The left hand side is given by Lemma~\ref{lem:adjoint}.
\end{proof}
Some particular cases of Proposition~\ref{theorem:spec} are
\begin{equation}
\label{rel_eq}
    \begin{split}
       \chi'(\lambda_j)R_{1 j}L_{jn} &= c_1 \dotsm c_{n-1},\\
       \chi'(\lambda_j)R_{n j}L_{j 1} & = \widetilde{b}_1\dotsm \widetilde{b}_{n-1},\\
       \chi'(\lambda_j)R_{1 j}L_{j1 } & = \chi_{n-1, 1}(\lambda_j),\\
       \chi'(\lambda_j)R_{n j}L_{j n }& = \chi_{n,2}(\lambda_j). 
    \end{split}
\end{equation}
         
We are now in a position to prove Lemma \ref{lem:jac1}. Let $T\in \T(n)$ be
the matrix introduced in Eq.\eqref{eq:tridiagmatrix} and let $T=R\Lambda R^{-1}$ be
its spectral decomposition. Consider the sequence of monic 
polynomials constructed from the recurrence relation
\begin{equation}
\label{rec_rel}
\chi_k(x) = (x - a_k)\chi_{k-1}(x) -b_{k-1}\chi_{k-2}(x), \quad k=1,\dotsc,n,
\end{equation}
with initial conditions $\chi_{0}=1$, $\chi_{-1}=0$ and $b_0$ undetermined. 
Now, $\lambda_j^{(k)}$ is a zero of 
$\chi_k(x)$ if 
\[
  \left(  \chi_{k-1}\left(\lambda^{(k)}_j\right), \chi_{k-2}\left(\lambda^{(k)}_j\right),
    \dotsc, \chi_0\left( \lambda_j^{(k)}\right) \right)
\]
is a left eigenvector of 
\begin{equation*}
T_{k\,1} =\begin{pmatrix}
a_k & 1 &  &  \\
b_{k-1} & a_{k-1} & \ddots &  \\
 & \ddots & \ddots & 1 \\
 &  & b_1 & a_1 
\end{pmatrix},
\end{equation*}
which is the principal submatrix of $T$ obtained by keeping the last $k$ rows
and columns. It follows that\footnote{The characteristic 
polynomials $\chi_k$ are defined differently from those introduced in Eq.~\eqref{cp_sm}. In Sec.~\ref{sec:Jacobian} they are defined as the
characteristic polynomials of the principal submatrix obtained by keeping the 
\emph{first} $k$ rows and columns. 
Definition~\eqref{cp_sma} is more convenient for the calculations in 
this appendix.} 
\be
\label{cp_sma}
\chi_k(x) = \chi_{k,1}(x)= \det\left(x I - T_{k,1}\right) = 
\prod_{j=1}^k\left(x-\lambda_j^{(k)}\right).
\ee
Write
\begin{align}
\label{eq:twoterm}
\prod_{\substack{1\leq i \leq k\\ 1\leq j\leq k-1}}
\left(\lambda_i^{(k)}-\lambda_j^{(k-1)}\right)=\prod_{i=1}^k\chi_{k-1}\left(\lambda_i^{(k)}\right)=\prod_{i=1}^{k-1}\chi_{k}\left(\lambda_i^{(k-1)}\right).
\end{align}
From the recurrence relation~\eqref{rec_rel}, we have 
\[
\prod_{i=1}^{k-1}\chi_{k}\left(\lambda_i^{(k-1)}\right)=(-b_{k-1})^{k-1}
\prod_{i=1}^{k-1}\chi_{k-2}\left(\lambda_i^{(k-1)}\right).
\]
Then, the relation~\eqref{eq:twoterm} leads to  
\[
\prod_{i=1}^{k-1}\chi_{k}\left(\lambda_i^{(k-1)}\right)=(-b_{k-1})^{k-1}
\prod_{i=1}^{k-2}\chi_{k-1}\left(\lambda_i^{(k-2)}\right)
\]
Therefore, by iterating the last equality we arrive at 
\begin{equation}
   \label{prod_bj}
   \begin{split}
  \prod_{i=1}^n\chi_{n-1}\left(\lambda_i^{(n)}\right) &= \prod_{i=1}^{n-1}\chi_{n}\left(\lambda_i^{(n-1)}\right)\\
    & =(-b_{n-1})^{n-1} \prod_{i=1}^{n-2}
    \chi_{n-1}\left(\lambda_i^{(n-2)}\right)\\
    &=(-b_{n-1})^{n-1}(-b_{n-2})^{n-2}\prod_{i=1}^{n-3}
    \chi_{n-2}\left(\lambda_i^{(n-3)}\right)\\
    &=\dotsb = (-1)^{\frac{n(n-1)}{2}}\prod_{j=1}^{n-1}b_j^j.
\end{split}
\end{equation}

Consider the derivative
\[
\chi'_{n}(x)=\sum_{i=1}^n
\prod_{\substack{j=1\\j\neq i}}^n(x-\lambda_j)
\]
and take the product 
\be
\label{prod_der}
\prod_{i=1}^n\chi_{n}'(\lambda_i)=(-1)^\frac{n(n-1)}{2}\Delta(\blambda)^2.
\ee
If the matrix $R$ satisfies Lemma~\ref{parameterization}, then   
$R_{1j}=r_j$ and $L_{j1}=1$ for all $j=1,\dotsc,n$.  Therefore, 
the third relation in Eq.~\eqref{rel_eq} gives
\[
r_{j}=\frac{\chi_{n-1}(\lambda_j)}{\chi_{n}'(\lambda_j)},
\]
which implies
\be
\label{prod_rj}
\prod_{j=1}^nr_{j}=\prod_{j=1}^n
\frac{\chi_{n-1}(\lambda_j)}{\chi_{n}'(\lambda_j)}.
\ee
Since $\lambda_j^{(n)}=\lambda_j$, combining Eqs.~\eqref{prod_bj}, \eqref{prod_der} and \eqref{prod_rj} gives
\[
\Delta(\blambda)^2=\frac{\prod_{k=1}^{n-1}b_k^k}{\prod_{k=1}^nr_{k}}.
\]

\bibliographystyle{amsplain}
\bibliography{complexbeta}

\end{document}